\DeclareMathOperator*{\argmin}{\arg\min}
\begin{document}

\title{Delay-Optimal User Scheduling and Inter-Cell
Interference Management in Cellular Network via Distributive
Stochastic Learning}

\newtheorem{Thm}{Theorem}
\newtheorem{Lem}{Lemma}
\newtheorem{Cor}{Corollary}
\newtheorem{Def}{Definition}
\newtheorem{Exam}{Example}
\newtheorem{Alg}{Algorithm}
\newtheorem{Prob}{Problem}
\newtheorem{Rem}{Remark}
\newtheorem{Proof}{Proof}
\newtheorem{Subproblem}{Subproblem}
\newtheorem{assumption}{Assumption}

\author{\authorblockN{Huang Huang {\em Student Member, IEEE}, Vincent K. N. Lau, {\em Senior Member, IEEE}}
\thanks{ The authors are with
the Department of Electronic and Computer Engineering (ECE), The
Hong Kong University of Science and Technology (HKUST), Hong Kong.
(email: huang@ust.hk, eeknlau@ee.ust.hk).}}

\markboth{To be appeared in IEEE Trans. Wireless Commun.}
{Shell
\MakeLowercase{\textit{et al.}}: Bare Demo of IEEEtran.cls for
Journals}

\maketitle

\begin{abstract}
In this paper, we propose a distributive queue-aware intra-cell user
scheduling and inter-cell interference (ICI) management control
design for a delay-optimal celluar downlink system with $M$ base
stations (BSs), and $K$ users in each cell. Each BS has $K$ downlink
queues for $K$ users respectively with heterogeneous arrivals and
delay requirements. The ICI management control is adaptive to joint
queue state information (QSI) over a slow time scale, while the user
scheduling control is adaptive to both the joint QSI and the joint
channel state information (CSI) over a faster time scale. We show
that the problem can be modeled as an infinite horizon average cost
Partially Observed Markov Decision Problem (POMDP), which is NP-hard
in general. By exploiting the special structure of the problem, we
shall derive an equivalent Bellman equation to solve the POMDP
problem. To address the distributive requirement and the issue of
dimensionality and computation complexity, we derive a distributive
online stochastic learning algorithm, which only requires local QSI
and local CSI at each of the $M$ BSs. We show that the proposed
learning algorithm converges almost-surely (with probability 1) and
has significant gain compared with various baselines.  The proposed
solution only has linear complexity order $O(MK)$.

\end{abstract}

\begin{keywords}
multi-cell systems, delay optimal control, partially observed Markov
decision problem (POMDP), interference management, stochastic
learning.
\end{keywords}

\section{Introduction}\label{sec:intro}

It is well-known that cellular systems are {\em interference
limited} and there are a lot of works to handle the {\em inter-cell
interference} (ICI) in cellular systems.  Specifically, the optimal
binary power control (BPC) for the sum rate maximization has been
studied in \cite{BPC:2008}. They showed that BPC could provide
reasonable performance compared with the multi-level power control
in the multi-link system. In \cite{pattern:2009}, the authors
studied a joint adaptive multi-pattern reuse and intra-cell user
scheduling scheme, to maximize the long-term network-wide utility.
The ICI management runs at a slower scale than the user selection
strategy to reduce the communication overhead. In
\cite{multicell:cooperation:2008} and the reference therein,
cooperation or coordination is also shown to be a useful tool to
manage ICI and improve the performance of the celluar network.

However, all of these works have assumed that there are infinite
backlogs at the transmitter, and the control policy is only a
function of channel state information (CSI). In practice,
applications are delay sensitive, and it is critical to optimize the
delay performance in the cellular network. A systematic approach in
dealing with delay-optimal resource control in general delay regime
is via Markov Decision Process (MDP) technique. In
\cite{Delay_IT:2006,Vincent:MIMO}, the authors applied it to obtain
the delay-optimal cross-layer control policy for broadcast channel
and point-to-point link respectively. However, there are very
limited works that studied the delay optimal control problem in the
cellular network. Most existing works simply proposed heuristic
control schemes with partial consideration of the queuing
delay\cite{multicell:multiuser:2009}. As we shall illustrate, there
are various technical challenges involved regarding delay-optimal
cellular network.

\begin{itemize}
\item{\bf Curse of Dimensionality:} Although MDP
technique is the systematic approach to solve the delay-optimal
control problem, a primal difficulty is the curse of
dimensionality\cite{Bertsekas:2007}. For example, a huge state space
(exponential in the number of users and number of cells) will be
involved in the MDP and brute force value or policy iterations
cannot lead to any implementable solution\footnote{For a celluar
system with 5 BSs, 5 users served by each BS, a buffer size of 5 per
user and 5 CSI states for each link between one user and one BS, the
system state space contains
$(5+1)^{5\times5}\times5^{5\times5\times5}$ states, which is already
unmanageable.} \cite{RL:survey,Powell:2007}. Furthermore, brute
force solutions require explicit knowledge of transition probability
of system states, which is difficult to obtain in the complex
systems.

\item{\bf Complexity of the Interference Management:} Jointly optimal ICI
management and user scheduling requires heavy computation overhead
even for the throughput optimization problem \cite{pattern:2009}.
Although grouping clusters of cells \cite{BPC:2008} and considering
only neighboring BSs \cite{BSs:neighbour} were proposed to reduce
the complexity, complex operations on a slot by slot basis are still
required, which is not suitable for the practical implementation.

\item{\bf Decentralized Solution:} For delay-optimal multi-cell control,
the entire system state is characterized by the global CSI (CSI from
any BS to any MS) and the global QSI (queue length of all users).
Such system state information are distributed locally at each BS and
centralized solution (which requires global knowledge of the CSI and
QSI) will induce substantial signaling overhead between the BSs and
the Base Station Controller (BSC).
\end{itemize}

In this paper, we consider the delay-optimal inter-cell ICI
management control and intra-cell user scheduling for the cellular
system. For implementation consideration, the ICI management control
is computed at the BSC at a longer time scale and it is adaptive to
the QSI only. On the other hand, the intra-cell user scheduling
control is computed distributively at the BS at a smaller time scale
and hence, it is adaptive to both the CSI and QSI. Due to the {\em
two time-scale} control structure, the delay optimal control is
formulated as an infinite-horizon average cost Partially Observed
Markov Decision Process (POMDP). 
Exploiting the special structure, we propose an {\em equivalent
Bellman Equation} to solve the POMDP. Based on the equivalent
Bellman equation, we propose a distributive online learning
algorithm to estimate a per-user value function as well as a
per-user $\mathbb{Q}$-factor\footnote{The $\mathbb{Q}$-factor
$\mathbb{Q}(s,a)$ is a function of the system state $s$ and the
control action $a$, which represents the {\em potential cost} of
applying a control action $a$ at the current state $s$ and applying
the action $a^{\prime}=\arg\min_{a}\mathbb{Q}(s^{\prime},a)$ for any
system state $s^{\prime}$ in the future\cite{Cao:2007}.}. Only the
local CSI and QSI information is required in the learning process at
each BS. We also establish the technical proof for the almost-sure
convergence of the proposed distributive learning algorithm. The
proposed algorithm is quite different from the iterative update
algorithm for solving the deterministic NUM \cite{Palomar:NUM:2006},
where the CSI is always assumed to be quasi-static during the
iterative updates. However, the delay-optimal problem we considered
is stochastic in nature, and during the iterative updates, the
system state will not be quasi-static anymore. In addition, the
proposed learning algorithm is also quite different from
conventional stochastic learning\cite{Cao:2007,Q-learning:2007}. For
instance, conventional stochastic learning requires centralized
update and global system state knowledge and the convergence proof
follows from standard {\em contraction mapping}
arguments\cite{Bertsekas:2007}. However, due to the distributive
learning requirement and simultaneous learning of the per-user value
function and $\mathbb{Q}$-factor, it is not trivial to establish the
contraction mapping property and the associated convergence proof.
We also illustrate the
performance gain of the proposed solution against various baselines
via numerical simulations. Furthermore, the solution has linear
complexity order $O(MK)$ and it is quite suitable for the practical
implementation.

\section{System Model}\label{sec:model}
In this section, we shall elaborate the system model, as well as the
control policies. We consider the downlink of a wireless celluar
network consisting of $M$ BSs, and there are $K$ mobile users in
each cell served by one BS. Specifically, let
$\mathcal{M}=\{1,...,M\}$ and $\mathcal{K}_m=\{1,...,K\}$ denote the
set of BSs and the set of users served by the BS $m$ respectively.
$k\in\mathcal{K}_m$ denotes the $k$-th user served by BS $m$.
The time dimension is partitioned into {\em scheduling
slots} (every slot lasts for $\tau$ seconds). The system model is
illustrated in Fig.\ref{fig:sys_model}.

\begin{figure}
 \begin{center}
  \resizebox{9cm}{!}{\includegraphics{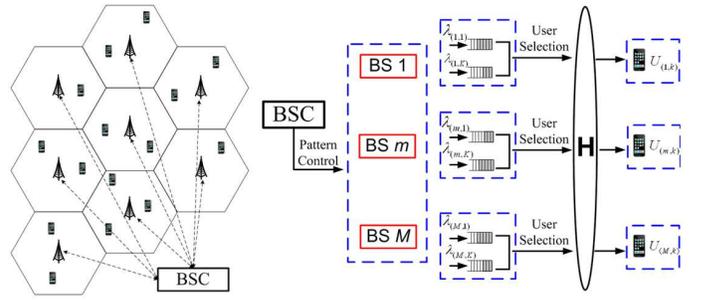}}
 \end{center}
    \caption{Physical layer and queueing model of celluar network.}
    \label{fig:sys_model}
\end{figure}

\subsection{Source Model}
In each BS, there are $K$ independent application streams dedicated
to the $K$ users respectively. Let
$\mathbf{A}(t)=\{\mathbf{A}_m(t)\}_{m=1}^{M}$ and
$\mathbf{A}_m(t)=\{A_{(m,k)}(t)\}_{k=1}^{K}$, where $A_{(m,k)}(t)$
represents the new arrivals (number of bits) for the user
$k\in\mathcal{K}_m$ at the end of the slot $t$.
\begin{assumption}[Assumption on Source
Model]\label{ass:source_model} We assume that the arrival process
$A_{(m,k)}(t)$ is i.i.d over the scheduling slot $t$ according to a
general distribution $\Pr\{A_{(m,k)}\}$ with average arrival rate
$\lambda_{(m,k)}=\mathbb{E}[A_{(m,k)}]$, and the arrival processes
for all the users are independent with each other, i.e.,
$\Pr\{A_{(m,k)}A_{(n,l)}\}=\Pr\{A_{(m,k)}\}\Pr\{A_{(n,l)}\}$ if
$m\neq n$ or $k\neq l$. ~\hfill \IEEEQED
\end{assumption}


Let $\mathbf{Q}(t)=\{\mathbf{Q}_m(t)\}_{m=1}^{M}\in\mathcal{Q}$
denote the global QSI in the system, where $\mathcal{Q}$ is the
state space for the global QSI.
$\mathbf{Q}_m(t)=\{Q_{(m,k)}(t)\}_{k=1}^K$ denotes the QSI in the BS
$m$, where $Q_{(m,k)}(t)$ represents the number of bits for user
$k\in\mathcal{K}_m$ at the beginning of the slot $t$, and $N_{Q}$
denotes the maximal buffer size (bits). When the buffer is full,
i.e, $Q_{(m,k)}=N_Q$, new bits arrivals will be dropped. The
cardinality of the global QSI  is $I_Q=(1+N_Q)^{MK}$.

\subsection{Channel Model and Physical Layer Model}
Let $H_{(m,k)}^n(t)$ and $L_{(m,k)}^n$ denote the small scale
channel fading gain and the path loss from the $n$-th BS to the user
$k\in\mathcal{K}_m$ respectively, and
$\mathbf{H}_{(m,k)}(t)=\{H_{(m,k)}^n(t)\}_{n=1}^M$ is the local CSI
states for user $k$.
$\mathbf{H}_m(t)=\{\mathbf{H}_{(m,k)}(t)\}_{k=1}^K$ denotes the
local CSI states for BS $m$, and the global CSI is denoted as
$\mathbf{H}(t)=\{\mathbf{H}_{m}(t)\}_{m=1}^M\in\mathcal{H}$, where
$\mathcal{H}$ is the state space for the global CSI.
\begin{assumption}[Assumption on Channel Model]\label{ass:csi_model}
We assume that the global $\mathbf{H}$ is quasi-static in each slot.
Furthermore, $H_{(m,k)}^n(t)$ is i.i.d over the scheduling slot $t$
according to a general distribution $\Pr\{H_{(m,k)}^n\}$ and the
small scale channel fading gains for all users are independent with
each other. The path loss $L_{(m,k)}^n$ remains constant for the
duration of the communication session. ~\hfill \IEEEQED
\end{assumption}

The cellular system shares a single common channel with bandwidth
$W$Hz (all the BSs use the same channel). At the beginning of each
slot, the BS is either turned on (with transmit power $P^m_{\max}$)
or off (with transmit power 0)\footnote{Note that the on-off BS
control is shown to be close to optimal
in\cite{BPC:2008,pattern:2009}. Moreover, the solution framework can
be easily extended to deal with discrete BS power control.},
according to a {\em ICI management control policy}, which is defined
later. At each slot, a BS can select only one user for its data
transmission. Specifically, let
$\mathbf{p}=\{p_{m}\}_{m=1}^M\in\mathcal{P}$ denotes an ICI
management control pattern, where $p_{m}=1$ denotes BS $m$ is
active, $p_{m}=0$ otherwise, and $\mathcal{P}$ denotes the set of
all possible control patterns. Furthermore, let
$\mathcal{M}_\mathbf{p}\in\mathcal{M}$ be the set of BSs activated
by the pattern $\mathbf{p}$ and $\mathcal{P}_m\in\mathcal{P}$ be the
set of patterns that activate the BS $m$. The signal received by the
user $k\in\mathcal{K}_m$ at slot $t$, when pattern
$\mathbf{p}\in\mathcal{P}_m$ is selected, is given by
\begin{equation}
\label{eq:sys_model} \begin{array}{l} y_{(m,k)}[t]=
\sqrt{H^m_{(m,k)}L^m_{(m,k)}}x_{m}[t]+\\
\quad\quad\quad\quad\sum\nolimits_{n\neq
m,n\in\mathcal{M}_\mathbf{p}}\sqrt{H^n_{(m,k)}L_{(m,k)}^n}x_{n}[t]+z[t]
\end{array}
\end{equation}
where $x_{m}[t]$ is the transmit signal from the $m$-th BS to the
$k$-th user at slot $t$, and $\{z[t]\}_{t=1}^{\infty}$ is the i.i.d
$\mathcal{N}(0,N_0)$ noise. The achievable data rate of user $k$ can
be expressed by
\begin{equation}
\label{eq:rate}\begin{array}{l} R_{(m,k)}=\\
\left\{\begin{array}{ll}
W\log_2\left(1+\frac{\xi
P^m_{\max}H^m_{(m,k)}L_{(m,k)}^m}{I_{(m,k)}+N_0W}\right)s_{(m,k)}
& \text{if $\mathbf{p}\in\mathcal{P}_m$}\\
0 & \text{otherwise}
\end{array}
\right.
\end{array}
\end{equation}
where $I_{(m,k)}=\sum\limits_{n\neq m,n\in\mathcal{M}_\mathbf{p}}
P^n_{\max}H^n_{(m,k)}L_{(m,k)}^n$, $s_{(m,k)}\in\{0,1\}$ is an
indicator variable with $s_{(m,k)} = 1$ when the user $k$ is
scheduled. $\xi\in(0,1]$ is a constant can be used to model both the
coded and uncoded systems\cite{Vincent:MIMO}.

\subsection{ICI Management and User Scheduling Control Policy}\label{sec:con_pol}
At the beginning of the slot, the BSC will decide which BSs are
allowed to transmit according to a stationary ICI management control
policy defined below.
\begin{Def}[Stationary ICI Management Control Policy]
A stationary ICI management control policy $\Omega_{\mathbf{p}}:
\mathcal{Q}\rightarrow\mathcal{P}$ is defined as the mapping from
current global QSI to an ICI management pattern
$\Omega_{\mathbf{p}}(\mathbf{Q})=\mathbf{p}$. ~\hfill \IEEEQED
\end{Def}



Let $\boldsymbol{\chi}(t)=\{\mathbf{H}(t),\mathbf{Q}(t)\}$ to be the
global system state at the beginning of slot $t$. The active user at
each cell is selected according to a user scheduling policy defined
below.
\begin{Def}[Stationary User Scheduling Policy]
A stationary user scheduling policy
$\Omega_{\mathbf{s}}:\{\mathcal{Q},\mathcal{H}\}\rightarrow\mathcal{S}$
is defined as the mapping from current global system state
$\boldsymbol{\chi}$ to current user scheduling action
$\Omega_{\mathbf{s}}(\boldsymbol{\chi})=\mathbf{s}\in\mathcal{S}$.
The scheduling action $\mathbf{s}$ is a set of all the users'
scheduling indicator variable, i.e., $\mathbf{s}=\{s_{(m,k)},\forall
k\in\mathcal{K}_m,\forall m\}$. It represents which users are
scheduled and which users are not in any given slot. $\mathcal{S}$
is the set of all user scheduling actions. ~\hfill \IEEEQED
\end{Def}

For notation convenience, let
$\Omega=\{\Omega_{\mathbf{p}},\Omega_{\mathbf{s}}\}$ to be the joint
control policy, and
$\Omega(\boldsymbol{\chi})=\{\mathbf{p},\mathbf{s}\}$ be the control
action under state $\boldsymbol{\chi}$.


\section{Problem Formulation}\label{sec:problem}
In this section, we will first elaborate the dynamics of system
state under a control policy $\Omega$. Based on that, we shall
formally formulate the delay-optimal control problem.
\subsection{Dynamics of System State}
Given the new arrival $A_{(m,k)}(t)$ at the end of the slot $t$, the
current system state $\boldsymbol{\chi}(t)$ and the control action
$\Omega(\boldsymbol{\chi}(t))$, The queue evolution for user
$k\in\mathcal{K}_m$ is given by:
\begin{equation}
Q_{(m,k)}(t+1)=\big[\big(Q_{(m,k)}(t)-U_{(m,k)}(t)\big)^++A_{(m,k)}(t)\big]_{\bigwedge
N_Q}
\end{equation}
where $U_{(m,k)}(t)=\lfloor
R_{(m,k)}(\boldsymbol{\chi}(t),\Omega(\boldsymbol{\chi}(t)))\tau\rfloor$
is the number of bits delivered to user $k$ at slot $t$, and
$R_{(m,k)}(\boldsymbol{\chi}(t),\Omega(\boldsymbol{\chi}(t)))$,
given by (\ref{eq:rate}), is the achievable data rate under the
control action $\Omega(\boldsymbol{\chi}(t))$. $\lfloor x\rfloor$
denotes the floor of $x$, $(x)^+=\max(x,0)$, and $(x)_{\bigwedge
N_Q}=\min(x,N_Q)$. Let $\mathbf{U}(t)=\{\mathbf{U}_m(t)\}_{m=1}^M$,
and $\mathbf{U}_m(t)=\{U_{(m,k)}(t)\}_{k=1}^K$, $U_{(m,k)}(t)=
R_{(m,k)}(\boldsymbol{\chi}(t),\Omega(\boldsymbol{\chi}(t)))\tau$
for the user $k\in\mathcal{K}_m$, and
$\mathbf{\hat{Q}}(t+1)=\big[\big(\mathbf{Q}(t)-\mathbf{U}(t)\big)^++\mathbf{A}(t)\big]_{\bigwedge
N_Q}$. Therefore, given a control policy $\Omega$, the random
process $\{\mathbf{H}(t), \mathbf{Q}(t)\}$ is a controlled Markov
chain with transition probability
\begin{equation}
\label{eq:sys_tran}\begin{array}{l}
\Pr\{\boldsymbol{\chi}(t+1)|\boldsymbol{\chi}(t),\Omega(\boldsymbol{\chi}(t))\}=\\
\left\{\begin{array}{ll} \Pr\{\mathbf{H}(t+1)\}\Pr\{\mathbf{A}(t)\}
& \text{if
$\mathbf{Q}(t+1)=\mathbf{\hat{Q}}(t+1)$}\\
0 & \text{otherwise}
\end{array}
\right.
\end{array}
\end{equation}


\subsection{Delay Optimal Control Problem Formulation}
Given a stationary control policy $\Omega$, the average cost of the
user $k\in\mathcal{K}_m$ is given by:
\begin{equation}
\label{eq:T_single}
\overline{T}_{(m,k)}(\Omega)=\lim\sup_{T\rightarrow
\infty}\frac{1}{T}\sum\nolimits_{t=1}^T\mathbb{E}[f(Q_{(m,k)}(t))]
\end{equation}
where $f(Q_{(m,k)})$ is a monotonic increasing cost function of
$Q_{(m,k)}$. For example, when
$f(Q_{(m,k)})=Q_{(m,k)}/\lambda_{(m,k)}$, using Little's Law
\cite{Delay_IT:2006,Ross:2003}, $\overline{T}_{(m,k)}(\Omega)$ is an
approximation\footnote{Strictly speaking, the average delay is given
by $\overline{T}_{(m,k)}(\Omega)=\lim\sup_{T\rightarrow
\infty}\frac{1}{T}\sum\nolimits_{t=1}^T\mathbb{E}[\frac{Q_{(m,k)}}{\lambda_{(m,k)}(1-\text{PBD}_{(m,k)})}]$,
where $\text{PBD}_{(m,k)}$ is the bit dropping probability
conditioned on bit arrival. Since our target bit dropping
probability $\text{PBD}_{(m,k)}\ll 1$,
$\overline{T}_{(m,k)}(\Omega)=\lim\sup_{T\rightarrow
\infty}\frac{1}{T}\sum\nolimits_{t=1}^T\mathbb{E}[\frac{Q_{(m,k)}}{\lambda_{(m,k)}}]\approx\lim\sup_{T\rightarrow
\infty}\frac{1}{T}\sum\nolimits_{t=1}^T\mathbb{E}[\frac{Q_{(m,k)}}{\lambda_{(m,k)}(1-\text{PBD}_{(m,k)})}]$.}
of the average delay of user $k$. When
$f(Q_{(m,k)})=1_{\{Q_{(m,k)}\geq N_Q\}}$ and $A_{(m,k)}$ follows the
bernoulli process, $\overline{T}_{(m,k)}(\Omega)$ is the {\em bit
dropping probability} (conditioned on bit arrival). Note that, the
$MK$ queues in the celluar system are coupled together via the
control policy $\Omega$. In this paper, we seek to find an optimal
stationary control policy $\Omega$ to minimize the average cost in
(\ref{eq:T_single}). Specifically, we have:
\begin{Prob}[Delay Optimal Multi-cell Control Problem]\label{prob:delay} \footnote{In fact, the proposed solution framework
can be easily extended to deal with a more general QoS based
optimization. For example, say we minimize the average delay subject
to the constraints on average data rate:
$\overline{R}_{(m,k)}(\Omega)=\lim\sup_{T\rightarrow
\infty}\frac{1}{T}\sum\nolimits_{t=1}^T\mathbb{E}[R_{(m,k)}(t)]\geq
R_T^k$. The Lagrangian of such constrained optimization is:
$\min_{\Omega}J_{\beta}^{\Omega}=\sum\nolimits_{m,k}\left[
\beta_{(m,k)}\overline{T}_{(m,k)}(\Omega)+\mu_{(m,k)}\overline{R}_{(m,k)}(\Omega)\right]=\lim\sup_{T\rightarrow
\infty}\frac{1}{T}\sum\nolimits_{t=1}^T\mathbb{E}^{\Omega}
[g_{\mu}(\boldsymbol{\chi}(t),\Omega(\boldsymbol{\chi}(t)))]$, where
$g_{\mu}(\boldsymbol{\chi}(t),\Omega(\boldsymbol{\chi}(t)))=\sum_{m,k}\beta_{(m,k)}f(Q_{(m,k)})+\mu_{(m,k)}R_{(m,k)}$,
and $\mu_{(m,k)}$ is the Lagrange multiplier corresponding to the
QoS constraint $\overline{R}_{(m,k)}(\Omega)\geq R_T^k$. Note that
it has the same form as (\ref{eq:problem}) and the proposed solution
framework can be applied to the QoS constrained problem as well.}
For some positive constants
$\boldsymbol{\beta}=\{\beta_{(m,k)},,\forall
k\in\mathcal{K}_m,\forall m\}$, finding a stationary control policy
$\Omega$ that minimizes:
\begin{eqnarray}
\label{eq:problem}
\min_{\Omega}J_{\beta}^{\Omega}&=&\sum\nolimits_{m,k}\beta_{(m,k)}\overline{T}_{(m,k)}(\Omega)\\
&=&\lim\sup_{T\rightarrow
\infty}\frac{1}{T}\sum\nolimits_{t=1}^T\mathbb{E}^{\Omega}[g(\boldsymbol{\chi}(t),\Omega(\boldsymbol{\chi}(t)))]\nonumber
\end{eqnarray}
where
$g(\boldsymbol{\chi}(t),\Omega(\boldsymbol{\chi}(t))=\sum_{m,k}\beta_{(m,k)}f(Q_{(m,k)})$
is the per-slot cost, and $\mathbb{E}^{\Omega}$ denotes the
expectation w.r.t. the induced measure (induced by the control
policy $\Omega$ and the transition kernel in (\ref{eq:sys_tran})).
The positive constants $\boldsymbol{\beta}$ indicate the relative
importance of the users and for a given $\boldsymbol{\beta}$, the
solution to (\ref{eq:problem}) corresponds to a Pareto optimal point
of the multi-objective optimization problem given by $\min_{\Omega}
\overline{T}_{(m,k)}(\Omega), \forall m,k$. Moreover, a control
policy $\Omega^*$ is called Pareto optimal if for any control policy
$\Omega^{\prime}\neq\Omega^*$ such that
$\overline{T}_{(m,k)}(\Omega^{\prime})\leq\overline{T}_{(m,k)}(\Omega^*),
\forall m,k$, it implies that
$\overline{T}_{(m,k)}(\Omega^{\prime})=\overline{T}_{(m,k)}(\Omega^*),
\forall m,k$. In other words, we cannot reduce
$\overline{T}_{(m,k1)}$ without increasing other component (say
$\overline{T}_{(m,k2)}$) at Pareto optimal control
$\Omega^*$\cite{Boyd:2004}.
\end{Prob}

\section{General Solution to the Delay Optimal
Problem}\label{sec:opt_solution} In this section, we will show that
the delay optimal problem \ref{prob:delay} can be modeled as an
infinite horizon average cost POMDP, which is a very difficult
problem. By exploiting the special structure, we shall derive an
{\em equivalent Bellman equation} to solve the POMDP problem.

\subsection{Preliminary on MDP and POMDP}
An infinite horizon average cost MDP can be characterized by a tuple
of four objects:
$\{\mathbb{S},\mathbb{A},\Pr\{s^{\prime}|s,a\},g(s,a)\}$, where
$\mathbb{S}$ is a finite set of states and $\mathbb{A}$ is the
action space. $\Pr\{s^{\prime}|s,a\}$ is the transition probability
from state $s$ to $s^{\prime}$, given that the action
$a\in\mathbb{A}$ is taken. $g(s,a)$ is the per-slot cost function.
The objective is to find the optimal policy $\mathbf{a}=\{a(s)\}$ so
as to minimize the average per-slot cost $\theta$ as:
\begin{equation}
\label{eq:per-theta} \theta=\min_{\mathbf{a}}\lim_{T\rightarrow
\infty}\sup\frac{1}{T}\sum\nolimits_{t=1}^T\mathbb{E}^{\mathbf{a}}[g(s(t),a(s(t)))]
\end{equation}

If the policy space consists of {\em unichain policies} and the
associated induced Markov chain is irreducible, it is well known
that there exist a unique $\theta$ for each starting
state\cite{Cao:2007,Bertsekas:2007}. Furthermore, the optimal
control policy $\mathbf{a}$ can be obtained by the following Bellman
equation.
\begin{equation}
\label{eq:bellman}
V(s)+\theta=\min_{a(s)}\left\{g(s,a(s))+\sum\nolimits_{s^{\prime}}\Pr\{s^{\prime}|s,a(s)\}V(s^{\prime}))\right\}
\end{equation}
where $V(s)$ is called the value function. General offline
solutions, {\em value} or {\em policy iteration}, can be used to
find the value function $V(s)$ iteratively, as well as the optimal
policy\cite{Bertsekas:2007}.

%

POMDP is an extension of MDP when the control agent does not have
direct observation of the entire system state (and hence it is
called ``partially observed MDP''). Specifically, an infinite
horizon average cost POMDP can be characterized by a tuple
\cite{Meuleau:1999,POMDP:1998}:
$\{\mathbb{S},\mathbb{A},\Pr\{s^{\prime}|s,a\},g(s,a),\mathbb{O},O(z,s,a)\}$,
where $\{\mathbb{S},\mathbb{A},P(s^{\prime}|s,a),g(s,a)\}$
characterize a MDP and $\mathbb{O}$ is a finite set of observations.
$O(z,s,a)$ is the observation function, which gives the probability
(or stochastic relationship) between the partial observation $z$,
the actual system state $s$ and the control action $a$.
Specifically, $O(z,s,a)$ is the probability of getting a partial
observation ``$z$'' given that the current system state is $s$ and
the action $a$ was taken in the previous slot. A PODMP is a MDP
where current system state and the actions are based on the
observation $z$. The objective is to find the optimal policy
$\mathbf{a}=\{a(z)\}$ so as to minimize the average per-slot cost
$\theta$ in (\ref{eq:per-theta}). However, in general, it is a {\em
NP-hard} problem and there are various approximation solutions
proposed based on the special structure of the studied
problems\cite{POMDP:survey}.

\subsection{Equivalent Bellman Equation and Optimal Control Policy}
In this subsection, we shall first illustrate that the optimization
problem \ref{prob:delay} is an infinite horizon average cost POMDP.
We shall then exploit some special problem structure to simplify the
complexity and derive an {\em equivalent Bellman equation} to solve
the problem. For instance, in the delay optimal problem
\ref{prob:delay}, the ICI management control policy
$\Omega_{\mathbf{p}}$ is adaptive to the QSI $\mathbf{Q}$, while the
user scheduling policy $\Omega_{\mathbf{s}}$ is adaptive to the
complete system state $\{\mathbf{Q},\mathbf{H}\}$. Therefore, the
optimal control policy $ \Omega^*$ cannot be obtained by solving a
standard Bellman equation from conventional MDP\footnote{The policy
will be a function of the complete system state by solving a
standard bellman equation.}. In fact, problem \ref{prob:delay} is a
POMDP with the following specification.
\begin{itemize}
\item{\bf State Space:}  The system state is the global QSI and CSI
$\boldsymbol{\chi}=\{\mathbf{Q},\mathbf{H}\}\in\{\mathcal{Q},\mathcal{H}\}$.
\item{\bf Action Space:} The action is ICI management pattern and user
scheduling
$\{\mathbf{p},\mathbf{s}\}\in\{\mathcal{P},\mathcal{S}\}$.
\item{\bf Transition Kernel:} The transition probability
$\Pr\{\boldsymbol{\chi}^{\prime}|\boldsymbol{\chi},\mathbf{p},\mathbf{s}\}$
is given in (\ref{eq:sys_tran}).
\item{\bf Per-Slot Cost Function:} The per-slot cost function is
$g(\boldsymbol{\chi},\mathbf{p},\mathbf{s})=\sum_{m,k}\beta_{(m,k)}f(Q_{(m,k)})$.
\item{\bf Observation:} The observation for ICI management control
policy is global QSI, i.e., $z_{\mathbf{p}}=\mathbf{Q}$, while the
observation for User scheduling policy is the complete system state,
i.e., $z_{\mathbf{s}}=\boldsymbol{\chi}$.
\item{\bf Observation Function:} The observation function for ICI management control
policy is
$O_{\mathbf{p}}(z_{\mathbf{p}},\boldsymbol{\chi},\mathbf{p},\mathbf{s})=1$,
if $z_{\mathbf{p}}=\mathbf{Q}$, otherwise 0. Furthermore the
observation function for user scheduling policy is
$O_{\mathbf{s}}(z_{\mathbf{s}},\boldsymbol{\chi},\mathbf{p},\mathbf{s})=1$,
if $z_{\mathbf{s}}=\boldsymbol{\chi}$, otherwise 0.
\end{itemize}

While POMDP is a very difficult problem in general, we shall utilize
the notion of {\em action partitioning} in our problem to
substantially simplify the problem. We first define {\em partitioned
actions} below.
\begin{Def}[Partitioned Actions]
\label{def:partitioned action} Given a control policy $\Omega$, we
define
$\Omega(\mathbf{Q})=\{(\mathbf{p},\mathbf{s})=\Omega(\boldsymbol{\chi}):\boldsymbol{\chi}=(\mathbf{Q},\mathbf{H})\forall\mathbf{H}\in\mathcal{H}\}$
as the collection of actions under a given $\mathbf{Q}$ for all
possible $\mathbf{H}\in\mathcal{H}$. The complete policy $\Omega$ is
therefore equal to the union of all partitioned actions, i.e.,
$\Omega=\bigcup_{\mathbf{Q}}\Omega(\mathbf{Q})$. ~\hfill \IEEEQED
\end{Def}

Based on the action partitioning, we can transform the POMDP problem
into a regular infinite-horizon average cost MDP. Furthermore, the
optimal control policy $\Omega^*$ can be obtained by solving an {\em
equivalent Bellman equation} which is summarized in the theorem
below.
\begin{Thm}[Equivalent Bellman Equation]
\label{Thm:MDP_cond} The optimal control policy $\Omega^*=
(\Omega_{\mathbf{p}}^*, \Omega_{\mathbf{s}}^*)$ in problem
\ref{prob:delay} can be obtained by solving the {\em equivalent
Bellman equation} given by:
\begin{equation}
\label{eq:bellman_cond} V(\mathbf{Q})+\theta =
\min_{\Omega(\mathbf{Q})}\Big[\hat{g}(\mathbf{Q},\Omega(\mathbf{Q}))+\sum\limits_{\mathbf{Q}^{\prime}}\Pr\{\mathbf{Q}^{\prime}|\mathbf{Q},\Omega(\mathbf{Q})\}V(\mathbf{Q}^{\prime})\Big]
\end{equation}
where
$\hat{g}(\mathbf{Q},\Omega(\mathbf{Q}))=\sum_{m,k}\beta_{(m,k)}f(Q_{(m,k)})$
is the per-slot cost function, and the transition kernel is given by
$\Pr\{\mathbf{Q}^{\prime}|\mathbf{Q},\Omega(\mathbf{Q})\}=\mathbb{E}_{\mathbf{H}}\left[\Pr\{\mathbf{Q}^{\prime}|\mathbf{Q},\mathbf{H},\Omega(\boldsymbol{\chi})\}\right]$,
where
$\Pr\{\mathbf{Q}^{\prime}|\mathbf{Q},\mathbf{H},\Omega(\boldsymbol{\chi})\}$
is given by
\begin{equation}\begin{array}{l}
\Pr\{\mathbf{Q}^{\prime}|\mathbf{Q},\mathbf{H},\Omega(\boldsymbol{\chi})\}=\\
\quad\quad\left\{\begin{array}{ll} \Pr\{\mathbf{A}\} & \text{if
$\mathbf{Q}^{\prime}=\Big[\big(\mathbf{Q}-\mathbf{U}\big)^++\mathbf{A}\Big]_{\bigwedge N_Q}$}\\
0 & \text{otherwise}
\end{array}
\right.
\end{array}
\end{equation}
where $\mathbf{U}=\{\mathbf{U}_m\}_{m=1}^M$, and
$\mathbf{U}_m=\{U_{(m,k)}\}_{k=1}^K$, and $U_{(m,k)}=
R_{(m,k)}(\boldsymbol{\chi},\Omega(\boldsymbol{\chi}))\tau$ for
$k\in\mathcal{K}_m$. Suppose
$\Omega^*(\mathbf{Q})=\{\mathbf{p}^*({\mathbf{Q}}),\bigcup_{\mathbf{H}}\mathbf{s}^*(\mathbf{Q},\mathbf{H})\}$
is a solution that solves the Bellman equation in
(\ref{eq:bellman_cond}), the optimal control policy for the original
Problem \ref{prob:delay} is given by:
$\Omega_{\mathbf{p}}^*=\bigcup_{\mathbf{Q}}\{\mathbf{p}^*(\mathbf{Q})\}$
and
$\Omega_{\mathbf{s}}^*=\bigcup_{\mathbf{Q},\mathbf{H}}\{\mathbf{s}^*(\mathbf{Q},\mathbf{H})\}$.
The value function $V(\mathbf{Q})$ that solves
(\ref{eq:bellman_cond}) is a component-wise monotonic increasing
function. ~\hfill \IEEEQED
\end{Thm}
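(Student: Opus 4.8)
The plan is to establish the theorem in three stages: (i) reduce the POMDP to a regular MDP whose state is the QSI only; (ii) verify that the average-cost optimality (Bellman) equation for this reduced MDP is exactly \eqref{eq:bellman_cond}; and (iii) prove the monotonicity of $V(\mathbf{Q})$. For stage (i), I would start from the observation that, by Assumption~\ref{ass:csi_model}, the CSI process $\mathbf{H}(t)$ is i.i.d.\ across slots and independent of everything else, so it carries no memory. Using the notion of partitioned actions in Definition~\ref{def:partitioned action}, a stationary policy $\Omega$ can be written as $\Omega=\bigcup_{\mathbf{Q}}\Omega(\mathbf{Q})$, where $\Omega(\mathbf{Q})$ specifies the ICI pattern $\mathbf{p}(\mathbf{Q})$ (common to all $\mathbf{H}$, since $\Omega_{\mathbf{p}}$ sees only $\mathbf{Q}$) together with the scheduling map $\mathbf{H}\mapsto\mathbf{s}(\mathbf{Q},\mathbf{H})$. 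The key point is that, once $\Omega(\mathbf{Q})$ is fixed, the per-slot cost $\hat g(\mathbf{Q},\Omega(\mathbf{Q}))=\sum_{m,k}\beta_{(m,k)}f(Q_{(m,k)})$ depends only on $\mathbf{Q}$ (the cost in Problem~\ref{prob:delay} has no CSI dependence), and the one-step transition of $\mathbf{Q}$, after averaging the i.i.d.\ $\mathbf{H}$ out, is exactly $\Pr\{\mathbf{Q}'\mid\mathbf{Q},\Omega(\mathbf{Q})\}=\mathbb{E}_{\mathbf{H}}[\Pr\{\mathbf{Q}'\mid\mathbf{Q},\mathbf{H},\Omega(\boldsymbol{\chi})\}]$. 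Hence $\{\mathbf{Q}(t)\}$ is itself a controlled Markov chain with action set $\{\Omega(\mathbf{Q})\}$, and for every realizable sample path the time-average cost $\frac1T\sum_t\mathbb{E}[g(\boldsymbol{\chi}(t),\Omega(\boldsymbol{\chi}(t)))]$ in \eqref{eq:problem} equals $\frac1T\sum_t\mathbb{E}[\hat g(\mathbf{Q}(t),\Omega(\mathbf{Q}(t)))]$; minimizing over $\Omega$ is the same as minimizing over the collection of partitioned actions. This identifies Problem~\ref{prob:delay} with the MDP $(\mathcal{Q},\{\Omega(\mathbf{Q})\},\Pr\{\mathbf{Q}'\mid\mathbf{Q},\Omega(\mathbf{Q})\},\hat g)$.

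For stage (ii), I would invoke the standard average-cost MDP theory quoted in the excerpt (the discussion around \eqref{eq:per-theta}--\eqref{eq:bellman}): under the unichain/irreducibility assumption on the induced chain, there is a constant $\theta$ and a function $V(\cdot)$ on the (finite) state space $\mathcal{Q}$ solving the optimality equation, and any minimizer of the right-hand side yields an optimal stationary policy. Applying this to the reduced MDP from stage (i) gives precisely \eqref{eq:bellman_cond}. The decomposition of a minimizing $\Omega^*(\mathbf{Q})=\{\mathbf{p}^*(\mathbf{Q}),\bigcup_{\mathbf{H}}\mathbf{s}^*(\mathbf{Q},\mathbf{H})\}$ back into $\Omega_{\mathbf{p}}^*=\bigcup_{\mathbf{Q}}\{\mathbf{p}^*(\mathbf{Q})\}$ and $\Omega_{\mathbf{s}}^*=\bigcup_{\mathbf{Q},\mathbf{H}}\{\mathbf{s}^*(\mathbf{Q},\mathbf{H})\}$ is then just reading off the two components of the partitioned action; optimality transfers because the objective values coincide by stage (i).

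For stage (iii), I would prove component-wise monotonicity of $V$ by the usual relative value iteration argument: define $V_0\equiv 0$ and $V_{n+1}(\mathbf{Q})=\min_{\Omega(\mathbf{Q})}[\hat g(\mathbf{Q},\Omega(\mathbf{Q}))+\sum_{\mathbf{Q}'}\Pr\{\mathbf{Q}'\mid\mathbf{Q},\Omega(\mathbf{Q})\}V_n(\mathbf{Q}')]$, which converges (up to an additive constant) to $V$. I would show by induction that each $V_n$ is component-wise nondecreasing in $\mathbf{Q}$. The induction step rests on two facts: $\hat g(\mathbf{Q},\cdot)=\sum_{m,k}\beta_{(m,k)}f(Q_{(m,k)})$ is nondecreasing in each $Q_{(m,k)}$ because $f$ is monotonic increasing; and the queue dynamics are monotone, i.e., if $\tilde{\mathbf{Q}}\ge\mathbf{Q}$ component-wise then the departure/arrival recursion $\mathbf{Q}'=[(\mathbf{Q}-\mathbf{U})^++\mathbf{A}]_{\wedge N_Q}$ preserves the order in the stochastic-dominance sense (same arrivals $\mathbf{A}$, same CSI, a coupling argument on $[(\cdot-\mathbf{U})^++A]_{\wedge N_Q}$ which is nondecreasing in the first argument), so $\sum_{\mathbf{Q}'}\Pr\{\cdot\mid\tilde{\mathbf{Q}},a\}V_n(\mathbf{Q}')\ge\sum_{\mathbf{Q}'}\Pr\{\cdot\mid\mathbf{Q},a\}V_n(\mathbf{Q}')$ whenever $V_n$ is nondecreasing; taking minima over actions preserves the inequality. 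Passing to the limit gives monotonicity of $V$.

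The main obstacle I anticipate is making stage (i) fully rigorous — in particular, arguing that restricting attention to stationary policies that are \emph{measurable with respect to the partition} (i.e.\ the ICI action is constant over all $\mathbf{H}$ with the same $\mathbf{Q}$, and the scheduler depends on $\boldsymbol{\chi}$) loses nothing, and that the reduced chain is genuinely Markov with the stated averaged kernel. This is where the i.i.d.\ and mutual-independence structure of $\mathbf{H}$ in Assumption~\ref{ass:csi_model} is essential: it guarantees $\mathbf{H}(t+1)$ is independent of $(\mathbf{Q}(t),\mathbf{H}(t),\text{action})$ so the future of $\mathbf{Q}$ given its present does not require remembering $\mathbf{H}$. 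The monotonicity step (stage iii) is technically the most calculation-heavy but is routine given the explicit queue recursion; the conceptual subtlety is entirely in the POMDP-to-MDP reduction.
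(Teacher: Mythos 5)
Your proposal is correct and follows essentially the same route as the paper's own proof: the paper likewise uses the partitioned-action construction to recast the problem as an average-cost MDP on the QSI $\mathbf{Q}$ with kernel $\mathbb{E}_{\mathbf{H}}[\Pr\{\mathbf{Q}'\mid\mathbf{Q},\mathbf{H},\Omega(\boldsymbol{\chi})\}]$ and cost $\hat g$, invokes the standard average-cost optimality equation, and proves monotonicity of $V(\mathbf{Q})$ by induction over (relative) value iteration, comparing $\mathbf{Q}^1\succ\mathbf{Q}^2$ by reusing the minimizing action of $\mathbf{Q}^1$ at $\mathbf{Q}^2$ together with the monotone cost and queue recursion. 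The only cosmetic difference is that the paper phrases the induction step via this explicit suboptimal-action bound rather than your coupling/stochastic-dominance wording, which is the same argument.
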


\begin{proof}
Please refer to Appendix A. 
\end{proof}

Note that solving (\ref{eq:bellman_cond}) will obtain an ICI
management policy $\Omega_{\mathbf{p}}^*$ that is a function of QSI
$\mathbf{Q}$ and a user scheduling policy $\Omega_{\mathbf{s}}^*$
that is a function of the QSI and CSI $\{\mathbf{Q},\mathbf{H}\}$.
We shall illustrate this with a simple example below.

\begin{Exam}
Suppose there are two BSs with equal transmitting power
($P^m_{\max}=P,\forall m$), and there are three ICI management
control patterns in $\mathcal{P}$, given by
$\mathbf{p}_1=\{p_1=1,p_2=0\}$ (BS 1 is active),
$\mathbf{p}_2=\{p_1=0,p_2=1\}$ (BS 2 is active) and
$\mathbf{p}_3=\{p_1=1,p_2=1\}$ (both BSs are active). Assume
deterministic arrival where one bit will always arrive at each slot,
i.e., $\Pr\{A_{(m,k)}=1\}=1$. The number of users served by each BS
is $K=2$. The path loss $L_{(m,k)}^n=1$ for all $\{k,n,m\}$, and the
small scale fading gain is chosen from two values $\{H_{g},H_{b}\}$
with equal probability. As a result, the global CSI state
space\footnote{For the sake of easy discussion, we consider discrete
state space in this example. Yet, the proposed algorithms and
convergence results in the paper work for general continuous state
space as well.} is $\mathcal{H}=\{H_{g},H_{b}\}^{M^2K}$. Note that
the cardinality of CSI state space $\mathcal{H}$ is
$|\mathcal{H}|=2^{M^2K}=256$. Given a realization of the global QSI
$\mathbf{Q}$, the {\em partitioned actions} (following Definition
\ref{def:partitioned action}) is given by:
\begin{equation}
\Omega(\mathbf{Q})=\{\mathbf{p}(\mathbf{Q}),
\mathbf{s}(\mathbf{Q},\mathbf{H}^{(1)}),\cdots,
\mathbf{s}(\mathbf{Q},\mathbf{H}^{(256)})\}
\end{equation}
Using Theorem \ref{Thm:MDP_cond}, the optimal partitioned action
$\Omega^*(\mathbf{Q})$ is given by solving the right hand side (RHS)
of (\ref{eq:bellman_cond}):
\begin{equation}\begin{array}{l}
\Omega^*(\mathbf{Q})=\argmin\limits_{\{\mathbf{p}(\mathbf{Q}),
\{\mathbf{s}(\mathbf{Q},\mathbf{H}^{(i)})\}_{i=1}^{256}\}}\sum_{\mathbf{Q}^{\prime}}\sum_{\mathbf{H}^{(i)}
\in \mathcal{H} }\\
\quad\Big[\Pr\{\mathbf{H}^{(i)}\}
\Pr\{\mathbf{Q}^{\prime}|\mathbf{Q},\mathbf{H}^{(i)},\mathbf{p}(\mathbf{Q}),
\mathbf{s}(\mathbf{Q},\mathbf{H}^{(i)})\}V(\mathbf{Q}^{\prime})\Big]
\end{array}
\end{equation}
where
\begin{equation}\begin{array}{l}
\Pr\{\mathbf{Q}^{\prime}|\mathbf{Q},\mathbf{H}^{(i)},\mathbf{p}(\mathbf{Q}),\mathbf{s}(\mathbf{Q},\mathbf{H}^{(i)})\}=\\
\quad\quad\quad\quad\left\{\begin{array}{ll} 1 & \text{if }
\mathbf{Q}^{\prime} =
\Big[\big(\mathbf{Q}-\mathbf{U}\big)^++\mathbf{1}\Big]_{\bigwedge N_Q}\\
0 & \text{otherwise}
\end{array}\right.
\end{array}
\end{equation}
and $\mathbf{U}=\{U_{(1,1)},U_{(1,2)};U_{(2,1)},U_{(2,2)}\}$ is the
number of departure bits. 
For a given ICI management control
$\mathbf{p}(\mathbf{Q})=\mathbf{p}$, the optimal user scheduling
policy $\{\mathbf{s}^*(\mathbf{Q},\mathbf{H}^{(i)})\}$ is
\begin{equation}
\label{eq:tmp}\begin{array}{l}
\{\mathbf{s}^*(\mathbf{Q},\mathbf{H}^{(i)})\}=
\argmin\limits_{\{\mathbf{s}(\mathbf{Q},\mathbf{H}^{(i)})\}_{i=1}^{256}}\sum_{\mathbf{Q}^{\prime}}
\sum_{\mathbf{H}^{(i)} \in
\mathcal{H}}\\
\quad\Big[\Pr\{\mathbf{H}^{(i)}\}\Pr\{\mathbf{Q}^{\prime}|\mathbf{Q},\mathbf{H}^{(i)},\mathbf{p},
\mathbf{s}(\mathbf{Q},\mathbf{H}^{(i)})\}V(\mathbf{Q}^{\prime})\Big]
\end{array}
\end{equation}
Observe that the RHS of (\ref{eq:tmp}) is a decoupled objective
function w.r.t. the variables
$\{\mathbf{s}(\mathbf{Q},\mathbf{H}^{(i)})\}_{i=1}^{256}$ and hence,
applying standard decomposition theory,
\begin{equation}
\label{eq:opt_s}\begin{array}{l}
\mathbf{s}^*(\mathbf{Q},\mathbf{H}^{(i)})=\\
\quad\argmin\limits_{\mathbf{s}(\mathbf{Q},\mathbf{H}^{(i)})}\sum\limits_{\mathbf{Q}^{\prime}}\Pr\{\mathbf{Q}^{\prime}|\mathbf{Q},\mathbf{H}^{(i)},\mathbf{p},\mathbf{s}(\mathbf{Q},\mathbf{H}^{(i)})\}V(\mathbf{Q}^{\prime})
\end{array}
\end{equation}
As a result, the optimal ICI management control policy
$\mathbf{p}^*(\mathbf{Q})$ is given by:
\begin{equation}
\label{eq:opt_p}\begin{array}{l}
\mathbf{p}^*(\mathbf{Q})=\argmin_{\mathbf{p}(\mathbf{Q})}\sum\nolimits_{\mathbf{Q}^{\prime}}
\sum\nolimits_{\mathbf{H}^{(i)}\in\mathcal{H}}\\
\quad\Big[\Pr\{\mathbf{H}^{(i)}\}\Pr\{\mathbf{Q}^{\prime}|\mathbf{Q},\mathbf{H}^{(i)},\mathbf{p}(\mathbf{Q}),
\mathbf{s}^*(\mathbf{Q},\mathbf{H}^{(i)})\}V(\mathbf{Q}^{\prime})\Big]
\end{array}
\end{equation}
where $\mathbf{s}^*(\mathbf{Q},\mathbf{H}^{(i)})$ given in
(\ref{eq:opt_s}) is the optimal user scheduling policy under the ICI
management control policy $\mathbf{p}(\mathbf{Q})$. Using Theorem
\ref{Thm:MDP_cond}, the optimal ICI management control and user
selection control of the original Problem \ref{prob:delay} for a CSI
realization $\mathbf{H}^{(i)}$ and QSI realization $\mathbf{Q}$ are
given by $\mathbf{p}^*(\mathbf{Q})$ and $\mathbf{s}^*(\mathbf{Q},
\mathbf{H}^{(i)})$ respectively. ~\hfill \IEEEQED

\end{Exam}

\section{Distributive Value Function and $\mathbb{Q}$-factor Online Learning}\label{sec:learning}
The solution in Theorem \ref{Thm:MDP_cond} requires the knowledge of
the value function $V(\mathbf{Q})$. However, obtaining the value
function is not trivial as solving the Bellman equation
(\ref{eq:bellman_cond}) involves solving a very large system of the
nonlinear fixed point equations (corresponding to each realization
of $\mathbf{Q}$ in (\ref{eq:bellman_cond})). Brute-force solution of
$V(\mathbf{Q})$ require huge complexity, centralized implementation
and knowledge of global CSI and QSI at the BSC. This will also
induce huge signaling overhead because the QSI of all the users are
maintained locally at the $M$ BSs. In this section, we shall propose
a decentralized solution via distributive stochastic learning
following the structure as illustrated in Fig.
\ref{fig:learning_structure}. Moreover, we shall prove that the
proposed distributive stochastic learning algorithm will converge
almost-surely.

\subsection{Post-Decision State Framework}
In this section, we first introduce the post-decision state also
used framework, also used in \cite{Thesis:Salodkar} and the
references therein, to lay ground for developing the online learning
algorithm. The post-decision state is defined to be the virtual
system state immediately after making an action but before the new
bits arrive. For example,
$\boldsymbol{\chi}=\{\mathbf{Q},\mathbf{H}\}$ is the state at the
beginning of some time slot (also called the {\em pre-decision
state}), and making an action
$\Omega(\boldsymbol{\chi})=\{\mathbf{p},\mathbf{s}\}$, the
post-decision state immediately after the action is
$\widetilde{\boldsymbol{\chi}}=\{\widetilde{\mathbf{Q}},\mathbf{H}\}$,
where the transition to $\widetilde{\mathbf{Q}}$ is given by
$\widetilde{\mathbf{Q}}=\big(\mathbf{Q}-\mathbf{U}\big)^+$. If new
arrivals $\mathbf{A}$ occur in the post-decision state, and the CSI
changes to $\mathbf{H}^{\prime}$, then the system reaches the next
actual state, i.e., pre-decision state,
$\boldsymbol{\chi}^{\prime}=\{\big[\widetilde{\mathbf{Q}}+\mathbf{A}\big]_{\bigwedge
N_Q},\mathbf{H}^{\prime}\}$.

Using the action partitioning and defining the value function
$\widetilde{\mathbf{V}}$ on post-decision state
$\widetilde{\mathbf{Q}}$ (where pre-decision state is
$\{\mathbf{Q}=\big[\widetilde{\mathbf{Q}}+\mathbf{A}\big]_{\bigwedge
N_Q},\mathbf{H}\}$), $\widetilde{\mathbf{V}}$ will satisfy the
post-decision state Bellman equation\cite{Thesis:Salodkar}
\begin{equation}
\label{eq:bellman_post}\begin{array}{l}
\widetilde{V}(\widetilde{\mathbf{Q}})+\theta=\sum_{\mathbf{A}}\Pr\{\mathbf{A}\}\bigg\{
\min_{\Omega(\mathbf{Q})}\Big[\widetilde{g}(\mathbf{Q},\Omega(\mathbf{Q}))\\
\quad\quad\quad\quad+\sum\nolimits_{\widetilde{\mathbf{Q}}^{\prime}}\Pr\{\widetilde{\mathbf{Q}}^{\prime}|\mathbf{Q},\Omega(\mathbf{Q})\}\widetilde{V}(\widetilde{\mathbf{Q}}^{\prime})
\Big]\bigg\}
\end{array}
\end{equation}
where
$\widetilde{g}(\mathbf{Q},\Omega(\mathbf{Q}))=\sum_{m,k}\beta_{(m,k)}f(Q_{(m,k)})$,
$\Pr\{\widetilde{\mathbf{Q}}^{\prime}|\mathbf{Q},\Omega(\mathbf{Q})\}=\mathbb{E}_{\mathbf{H}}[\Pr\{\widetilde{\mathbf{Q}}^{\prime}|\mathbf{Q},\mathbf{H},\Omega(\mathbf{Q})\}]$,
and $\widetilde{\mathbf{Q}}^{\prime}$ is the next post-decision
state transited from $\mathbf{Q}$. As Theorem \ref{Thm:MDP_cond},
$\widetilde{V}(\widetilde{\mathbf{Q}})$ is also a component-wise
monotonic increasing function. The optimal policy is obtained by
solving the RHS of Bellman equation (\ref{eq:bellman_post}).


\subsection{Distributive User Scheduling Policy on the CSI Time Scale}
To reduce the size of the state space and to decentralize the user
scheduling, we approximate $\widetilde{V}(\widetilde{\mathbf{Q}})$
in (\ref{eq:bellman_post}) by the sum of per-user post-decision
state value function\footnote{Using the linear approximation in
(\ref{eq:linear_value}), we can address the curse of dimensionality
(complexity) as well as facilitate distributive implementation where
each BS could solve for
$\widetilde{V}_{(m,k)}(\widetilde{Q}_{(m,k)})$ based on local CSI
and QSI only.} $\widetilde{V}_{(m,k)}(\widetilde{Q}_{(m,k)})$, i.e.,
\begin{equation}
\label{eq:linear_value}
\widetilde{V}(\widetilde{\mathbf{Q}})\approx\sum\nolimits_{m,k}\widetilde{V}_{(m,k)}(\widetilde{Q}_{(m,k)})
\end{equation}
where $\widetilde{V}_{(m,k)}(\widetilde{Q}_{(m,k)})$ is defined as
the {\em fixed point} of the following per-user fixed point
equation:
\begin{equation}
\label{eq:user_selection}\begin{array}{l}
\widetilde{V}_{(m,k)}(\widetilde{Q}_{(m,k)})+\widetilde{V}_{(m,k)}(\widetilde{Q}_{(m,k)}^I)=\\
\sum\nolimits_{A_{(m,k)}}\Pr\{A_{(m,k)}\}\Big[ \beta_{(m,k)}f(Q_{(m,k)})+\\
\sum\limits_{\widetilde{Q}_{(m,k)}^{\prime}}\Pr\{\widetilde{Q}_{(m,k)}^{\prime}|Q_{(m,k)},s_{(m,k)}=1,\widetilde{\mathbf{p}}_m^I\}\widetilde{V}_{(m,k)}(\widetilde{Q}_{(m,k)}^{\prime})
\Big]
\end{array}
\end{equation}
where $Q_{(m,k)}=\widetilde{Q}_{(m,k)}+A_{(m,k)}$ is the
pre-decision state, $s_{(m,k)}=1$ means that the user $k$ is
scheduled to transmit at BS $m$,
$\widetilde{Q}_{(m,k)}^I\in\{0,\cdots,N_Q\}$ is a reference state
and $\widetilde{\mathbf{p}}_m^I\in\mathcal{P}_m$ is a reference ICI
management pattern (with the BS $m$ active). The per-user value
function $\widetilde{V}_{(m,k)}(\widetilde{Q}_{(m,k)})$ is obtained
by the proposed distributive online learning algorithm (explained in
section \ref{subsec:learning_algorithm}). Note that the state space
for the value function of $\widetilde{V}(\widetilde{\mathbf{Q}})$ is
substantially reduced from $(N_Q+1)^{MK}$ (exponential growth w.r.t
the number of all mobile users $MK$) to $MK(N_Q+1)$ (linear growth
w.r.t the number of all mobile users).

\begin{Cor}[Decentralized User Scheduling Actions]\label{cor:dstr_policy} Using the linear
approximation in (\ref{eq:linear_value}), the user scheduling action
of BS $m\in\mathcal{M}_{\mathbf{p}}$ under any given ICI management
pattern $\mathbf{p}$ (obtained by solving the RHS of Bellman
equation (\ref{eq:bellman_post})) is given by:
\begin{equation}
\label{eq:sm} \mathbf{s}_{m}=\{s_{k^*}=1, s_{(m,k)}=0, \forall k\neq
k^*\text{ and } k,k^*\in\mathcal{K}_m\}
\end{equation}
where
$k^*=\arg\max_{k\in\mathcal{K}_m}\widetilde{\delta}_{(m,k)}(Q_{(m,k)})$,
and
$\widetilde{\delta}_{(m,k)}(Q_{(m,k)})=\widetilde{V}_{(m,k)}(Q_{(m,k)})-\widetilde{V}_{(m,k)}((Q_{(m,k)}-U_{(m,k)})^+)$\footnote{
Note that $\widetilde{\delta}_{(m,k)}(0)=0,\forall k$, and hence the
users with empty buffer will not be scheduled and the activated BS
$m$ will serve the users with non-empty buffer (the chance for the
buffer of all $K$ users being empty at a given slot is very
small).}. $U_{(m,k)}=\log_2\left(1+\frac{\xi
\phi_{(m,k)}}{\varphi_{(m,k)}}\right)\tau$, where
$\varphi_{(m,k)}=\sum\limits_{n\neq m,n\in\mathcal{M}_{\mathbf{p}}}
P^n_{\max}H^n_{(m,k)}L_{(m,k)}^n+N_0W$ is the power sum of
interference and noise, and
$\phi_{(m,k)}=P^m_{\max}H^m_{(m,k)}L_{(m,k)}^m$ is the signal power.
~\hfill\IEEEQED
\end{Cor}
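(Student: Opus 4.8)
The plan is to substitute the linear approximation \eqref{eq:linear_value} into the minimization on the RHS of the post-decision Bellman equation \eqref{eq:bellman_post} and show that, for a fixed ICI management pattern $\mathbf{p}$, the resulting optimization over the scheduling action $\mathbf{s}$ decouples across BSs, and within each active BS $m\in\mathcal{M}_{\mathbf{p}}$ reduces to picking the single user that maximizes the per-user value-function decrement $\widetilde{\delta}_{(m,k)}$. First I would fix $\mathbf{p}$ and write out the term $\sum_{\widetilde{\mathbf{Q}}^{\prime}}\Pr\{\widetilde{\mathbf{Q}}^{\prime}|\mathbf{Q},\Omega(\mathbf{Q})\}\widetilde{V}(\widetilde{\mathbf{Q}}^{\prime})$ using \eqref{eq:linear_value}. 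Because the queue dynamics are per-user (the post-decision state of user $(m,k)$ is $\widetilde{Q}_{(m,k)}=(Q_{(m,k)}-U_{(m,k)})^+$, with $U_{(m,k)}$ depending only on the CSI, the pattern $\mathbf{p}$, and the scalar indicator $s_{(m,k)}$), the expectation factorizes and the sum becomes $\sum_{m,k}\widetilde{V}_{(m,k)}\big((Q_{(m,k)}-U_{(m,k)})^+\big)$. The per-slot cost $\widetilde{g}$ does not depend on $\mathbf{s}$, so minimizing the RHS over $\mathbf{s}$ is equivalent to minimizing $\sum_{m\in\mathcal{M}_{\mathbf{p}}}\sum_{k\in\mathcal{K}_m}\widetilde{V}_{(m,k)}\big((Q_{(m,k)}-U_{(m,k)})^+\big)$ over $\mathbf{s}$.

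Next I would invoke the structural constraint from Section \ref{sec:con_pol} that each active BS schedules exactly one user, so within BS $m$ the feasible scheduling actions are the $K$ choices $s_{(m,k^*)}=1$, $s_{(m,k)}=0$ for $k\neq k^*$. For the scheduled user $k^*$ we have $U_{(m,k^*)}>0$ (given in the statement, $U_{(m,k)}=\log_2(1+\xi\phi_{(m,k)}/\varphi_{(m,k)})\tau$), while for every unscheduled user $U_{(m,k)}=0$ and hence $\widetilde{V}_{(m,k)}\big((Q_{(m,k)}-0)^+\big)=\widetilde{V}_{(m,k)}(Q_{(m,k)})$, a term independent of the choice of $k^*$. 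Therefore the objective restricted to BS $m$ is, up to a constant, $\widetilde{V}_{(m,k^*)}\big((Q_{(m,k^*)}-U_{(m,k^*)})^+\big)-\widetilde{V}_{(m,k^*)}(Q_{(m,k^*)})=-\widetilde{\delta}_{(m,k^*)}(Q_{(m,k^*)})$, so minimizing is equivalent to choosing $k^*=\arg\max_{k\in\mathcal{K}_m}\widetilde{\delta}_{(m,k)}(Q_{(m,k)})$. Since the per-BS objectives share no variables, the joint minimizer over $\mathbf{s}$ is obtained by doing this independently at each $m\in\mathcal{M}_{\mathbf{p}}$, which gives \eqref{eq:sm}. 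I would also remark, as in the footnote, that $\widetilde{\delta}_{(m,k)}(0)=0$, so users with empty buffer are never preferred.

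The main obstacle is making the decoupling rigorous: I need to argue carefully that under the linear approximation \eqref{eq:linear_value} the post-decision transition kernel $\Pr\{\widetilde{\mathbf{Q}}^{\prime}|\mathbf{Q},\mathbf{H},\Omega(\mathbf{Q})\}$ factorizes into a product of per-user kernels, so that the expectation of a sum of per-user functions is the sum of per-user expectations — this uses Assumption \ref{ass:source_model} (independent arrivals) and Assumption \ref{ass:csi_model} (independent channel gains), together with the fact that, once $\mathbf{p}$ is fixed, $U_{(m,k)}$ is a deterministic function of local quantities only and so induces no cross-user coupling in the queue update. A secondary subtlety is that $U_{(m,k)}$ here is the real-valued rate$\times\tau$ rather than the floored integer version used in the exact dynamics; I would either note this is the same approximation already adopted in defining $\mathbf{U}$ in Section \ref{sec:problem}, or carry the floor through without affecting the argument since it does not introduce coupling either. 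Once the factorization is in hand, the remaining steps are the elementary per-BS comparison above.
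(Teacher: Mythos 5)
Your proposal is correct and takes essentially the same route as the paper's Appendix B: substitute the linear approximation (\ref{eq:linear_value}) into the RHS of (\ref{eq:bellman_post}), observe that $\widetilde{g}$ does not depend on $\mathbf{s}$, write the remaining term as $\sum_{m,k}\big[\widetilde{V}_{(m,k)}(Q_{(m,k)})(1-s_{(m,k)})+\widetilde{V}_{(m,k)}((Q_{(m,k)}-U_{(m,k)})^+)s_{(m,k)}\big]$, and reduce the per-BS problem to maximizing $\sum_{k\in\mathcal{K}_m}\widetilde{\delta}_{(m,k)}(Q_{(m,k)})s_{(m,k)}$ over the one-user-per-BS set $\mathcal{S}_m$. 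The only remark worth making is that the ``main obstacle'' you identify is not actually needed: once $\mathbf{p}$ is fixed, each user's post-decision queue is a deterministic function of its own local CSI, QSI and scheduling indicator, so linearity of expectation alone (no product-form factorization of the kernel, hence no appeal to Assumptions \ref{ass:source_model}--\ref{ass:csi_model}) yields the decoupled sum exactly as in the paper.
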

\begin{proof}
Please refer to Appendix B. 
\end{proof}

\begin{Rem}[Structure of the User Scheduling Actions]
The user scheduling action in (\ref{eq:sm}) is both function of
local CSI and QSI. Specifically, the number of bits to be delivered
$U_{(m,k)}$ is controlled by the local CSI $\mathbf{H}_{(m,k)}$, and
local QSI $Q_{(m,k)}$ will determine
$\widetilde{\delta}_{(m,k)}(Q_{(m,k)})$. Each user estimates
$\varphi_{(m,k)}$ and $\phi_{(m,k)}$ in the preamble phase, and
sends $U_{(m,k)}$ to the associated BS $m$ according to the process
as indicated in Fig.\ref{fig:learning_structure}. ~\hfill\IEEEQED
\end{Rem}

\begin{figure}
 \begin{center}
  \resizebox{9cm}{!}{\includegraphics{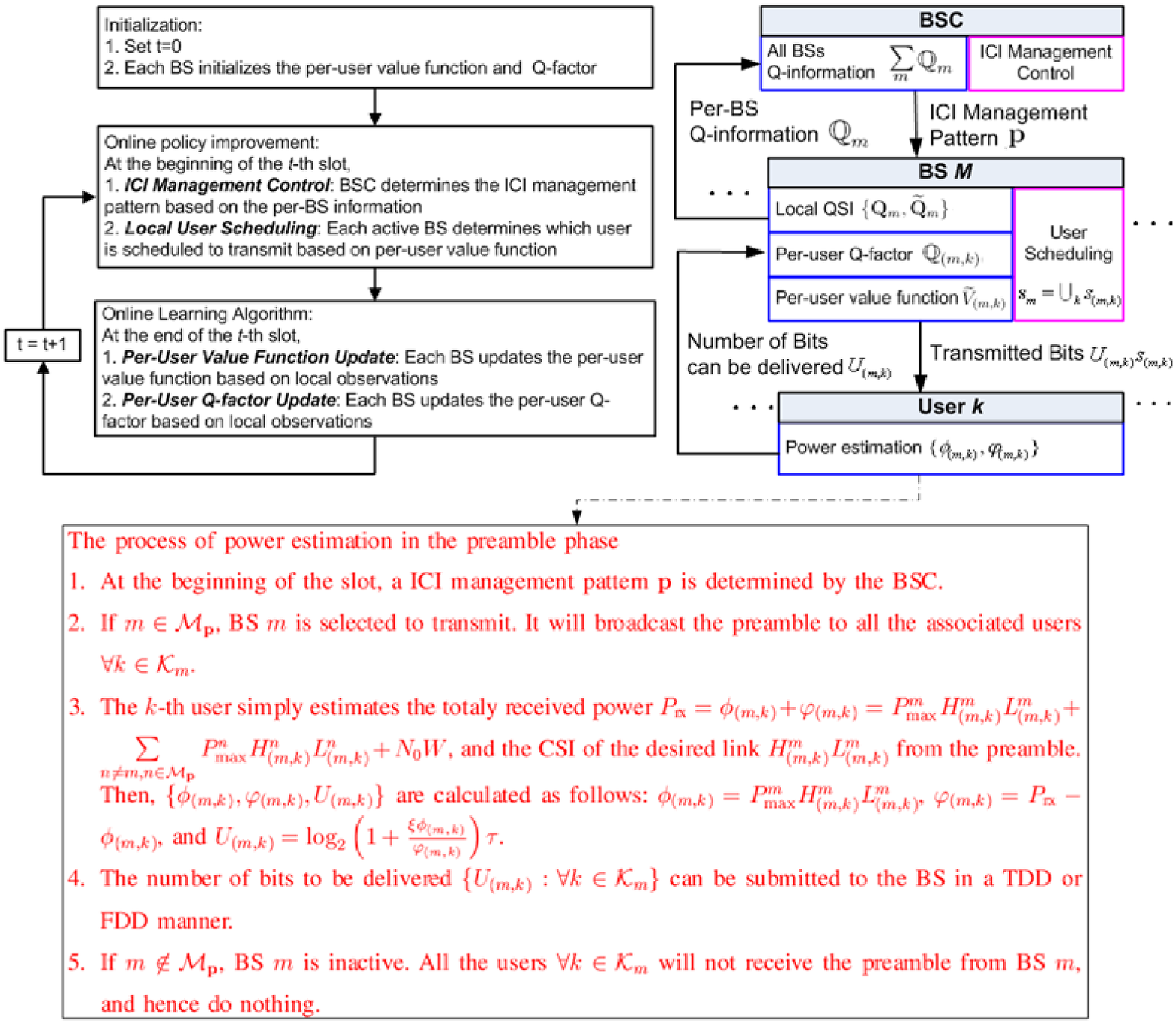}}
 \end{center}
    \caption{The system procedure for distributive per-user value function and per-user $\mathbb{Q}$-factor online learning algorithm.}
    \label{fig:learning_structure}
\end{figure}
\subsection{ICI Management Control Policy on the QSI Time Scale}
To determine the ICI management control policy, we define the
$\mathbb{Q}$-factor as follows \cite{Cao:2007}:
\begin{equation}
\label{eq:q_factor_orgn}\begin{array}{l}
\mathbb{Q}(\mathbf{Q},\mathbf{p})=\sum\nolimits_{m,k}\beta_{(m,k)}f(Q_{(m,k)})+\\
\quad\quad\quad\sum\nolimits_{\mathbf{Q}^{\prime}}\Pr\{\mathbf{Q}^{\prime}|\mathbf{Q},\mathbf{p}\}
\min_{\mathbf{p}^{\prime}}\mathbb{Q}(\mathbf{Q}^{\prime},\mathbf{p}^{\prime})-\theta
\end{array}
\end{equation}
where $\Pr\{\mathbf{Q}^{\prime}|\mathbf{Q},\mathbf{p}\}$ is the
transition probability from current QSI $\mathbf{Q}$ to
$\mathbf{Q}^{\prime}$, given current action $\mathbf{p}$, and
$\theta$ is a constant. Note that the $\mathbb{Q}$-factor
$\mathbb{Q}(\mathbf{Q},\mathbf{p})$ represents the potential cost of
applying a control action $\mathbf{p}$ at the current QSI
$\mathbf{Q}$ and applying the action
$\arg\min_{\mathbf{p}^{\prime}}\mathbb{Q}(\mathbf{Q}^{\prime},\mathbf{p}^{\prime})$
for any system state $\mathbf{Q}^{\prime}$ in the future. Similar to
(\ref{eq:linear_value}), we approximate the $\mathbb{Q}$-factor in
(\ref{eq:q_factor_orgn}) with a sum of per-user $\mathbb{Q}$-factor,
i.e,
\begin{equation}
\label{eq:q_approx}
\mathbb{Q}(\mathbf{Q},\mathbf{p})\approx\sum\nolimits_{m,k}\mathbb{Q}_{(m,k)}(Q_{(m,k)},\mathbf{p})
\end{equation}
where $\mathbb{Q}_{(m,k)}$ is defined as the {\em fixed point} of
the following per-user fixed point equation:
\begin{equation}
\label{eq:Q_linear}\begin{array}{l}
\mathbb{Q}_{(m,k)}(Q_{(m,k)},\mathbf{p})=\\
\beta_{(m,k)}f(Q_{(m,k)})-
\mathbb{Q}_{(m,k)}(Q_{(m,k)}^I,\mathbf{p}^I_m)+\sum\limits_{Q_{(m,k)}^{\prime}}\\
\Big[\Pr\{Q_{(m,k)}^{\prime}|Q_{(m,k)},
s_{(m,k)}=1,\mathbf{p}\}\min\limits_{\mathbf{p}^{\prime}}\mathbb{Q}_{(m,k)}(Q_{(m,k)}^{\prime},
\mathbf{p}^{\prime})\Big]
\end{array}
\end{equation}
where
$\Pr\{Q_{(m,k)}^{\prime}|Q_{(m,k)},s_{(m,k)}=1,\mathbf{p}\}=\mathbb{E}_{\mathbf{H}_{(m,k)}}[\Pr\{Q_{(m,k)}^{\prime}|Q_{(m,k)},s_{(m,k)}=1,\mathbf{H}_{(m,k)},\mathbf{p}\}]$.
$Q_{(m,k)}^I\in\{0,\cdots,N_Q\}$ is a reference state and
$\mathbf{p}^I_m\in\mathcal{P}$ is a reference ICI management control
pattern. The per-user $\mathbb{Q}$-factor $\mathbb{Q}_{(m,k)}$ is
obtained by the proposed distributive online learning algorithm
(explained in section \ref{subsec:learning_algorithm}). The BSC
collects the per-BS $\mathbb{Q}$-information
$\mathbb{Q}_m^t(\mathbf{p})=\sum_{(m,k)}\mathbb{Q}_{(m,k)}^t(Q_{(m,k)}^t,\mathbf{p})$
at the beginning of slot $t$, and the ICI management control policy
is given by:
\begin{equation}
\label{eq:learn_pattern_pol}
\mathbf{p}^t=\argmin\nolimits_{\mathbf{p}}\sum\nolimits_m\mathbb{Q}_m^t(\mathbf{p})
\end{equation}

In order to reduce the communication overhead between the $M$ BSs
and the BSC, we could further partition the local QSI space into $N$
regions\footnote{For example, one possible criteria is to partition
the local QSI space so that the probability of $\mathbf{Q}_{m}$
belonging to any region is the same (uniform probability
partitioning).} ($\mathcal{Q}_m=\bigcup_{n=1}^N \mathcal{R}_n$) as
illustrated in Fig. \ref{fig:pkt_region}. At the beginning of the
$t$-th slot, the $m$-th BS will update the BSC of the per-BS
$\mathbb{Q}$-information if its QSI state belongs to a new region.
Hence, the per-BS $\mathbb{Q}$-information at the BSC is updated
according to the following dynamics:
\begin{equation}
\label{eq:BSC_q_update}\begin{array}{l}
\mathbb{Q}_m^t(\mathbf{p})=\\
\left\{\begin{array}{ll}
\sum\limits_{m,k}\mathbb{Q}_{(m,k)}^t(Q_{(m,k)}^t,\mathbf{p}) &
\text{if $\mathbf{Q}_m^t\in\mathcal{R}_n,\mathbf{Q}_m^{t-1}\not\in
\mathcal{R}_n$}\\
\mathbb{Q}_m^{t-1}(\mathbf{p}) & \text{otherwise}
\end{array}
\right.
\end{array}
\end{equation}

\begin{Rem}[Communication Overhead]
The communication overhead between the $M$ BS and the BSC is reduced
from $O((N_Q+1)^{MK}+(N_H)^{M^2K})$ (exponential growth w.r.t the
number of users $K$) to $O(M(\alpha)^{|\mathcal{P}|})$ for some
constant $\alpha$ (O(1) w.r.t. $K$), where $N_H$ is the cardinality
of the CSI state space for one link.  ~\hfill \IEEEQED
\end{Rem}

\begin{figure}
 \begin{center}
  \resizebox{9cm}{!}{\includegraphics{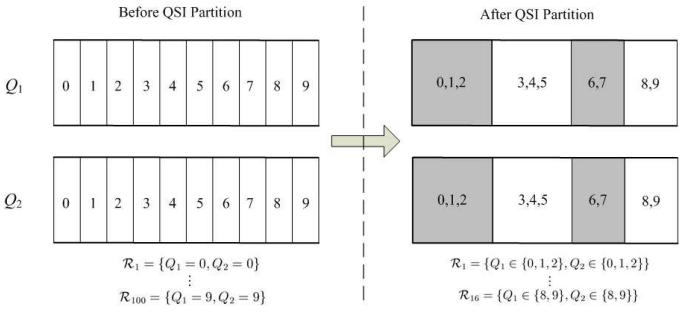}}
 \end{center}
    \caption{Illustration of one possible way of local QSI partition. There are $K=2$ users with buffer size $N_Q=9$, where each user's
QSI is partitioned into 4 regions, given by
$\big\{\{0,1,2\};\{3,4,5\};\{6,7\};\{8,9\}\big\}$. Note that the
number of local QSI regions for one BS is largely reduced from
$(N_Q+1)^K=100$ $\big(\mathcal{R}_1 =
\{Q_1=0,Q_2=0\},\cdots,\mathcal{R}_{100} = \{Q_1=9,Q_2=9\}\big)$ to
$4^K=16$ $\big(\mathcal{R}_1 =
\{Q_1\in\{0,1,2\},Q_2\in\{0,1,2\}\},\cdots, \mathcal{R}_{16} =
\{Q_1\in\{8,9\},Q_2\in\{8,9\}\} \big)$  after partition. }
    \label{fig:pkt_region}
\end{figure}

\subsection{Online Per-User Value Function and Per-User $\mathbb{Q}$-factor Learning Algorithm}
\label{subsec:learning_algorithm} The system procedure for
distributive online learning is given below:
\begin{itemize}
\item{\bf Initialization}: Each BS initiates the per-user value function and $\mathbb{Q}$-factor for
its $K$ users, denoted as $\{\widetilde{V}_{(m,k)}^0\}$ and
$\{\mathbb{Q}_{(m,k)}^0\}$, where
$\widetilde{V}_{(m,k)}^0(Q^{\prime}_{(m,k)})>\widetilde{V}_{(m,k)}^0(Q_{(m,k)}),
\forall Q^{\prime}_{(m,k)}>Q_{(m,k)}$.
\item{\bf ICI Management Control}: At the beginning of the
$t$-th slot, the BSC updates the $\mathbb{Q}$-information
$\mathbb{Q}_m^t(\mathbf{p})$ as (\ref{eq:BSC_q_update}) and
determines the ICI management pattern as
(\ref{eq:learn_pattern_pol}).
\item{\bf User Scheduling}:
If $m\in\mathcal{M}_{\mathbf{p}^t}$, BS $m$ is selected to transmit.
The user scheduling policy is determined according to (\ref{eq:sm}).
\item{\bf Local Per-user Value Function and Per-user $\mathbb{Q}$-factor Update}:
Based on the current observations, each of the $M$ BSs updates the
per-user value function $\widetilde{V}_{(m,k)}$ and the per-user
$\mathbb{Q}$-factor $\mathbb{Q}_{(m,k)}$ according to Algorithm
\ref{alg:learning}.
\end{itemize}

Fig. \ref{fig:learning_structure} illustrates the above procedure by
a flowchart. The algorithm for the per-user value function and
per-user $\mathbb{Q}$-factor update is given below:
\begin{Alg}[Online Learning Algorithm] \label{alg:learning}
Let $\widetilde{\mathbf{Q}}_m$ and $\mathbf{Q}_m$ be the current
observation of post-decision and pre-decision states respectively,
$\mathbf{A}_m$ be the current observation of new arrival,
$\{\mathbf{H}_{(m,k)}\}_{k=1}^K$ be the current observation of the
local CSI, and $\mathbf{p}$ is the realization of the ICI management
control pattern. The online learning algorithm for user
$k\in\mathcal{K}_m$ is given by
\begin{equation}
\label{eq:learn_value_f}\begin{array}{l}
\widetilde{V}_{(m,k)}^{t+1}(\widetilde{Q}_{(m,k)})=\\
\left\{\begin{array}{ll}
\widetilde{V}_{(m,k)}^t(\widetilde{Q}_{(m,k)})+\gamma(t)
\Bigl[\beta_{(m,k)}f(\widetilde{Q}_{(m,k)}+& \\
A_{(m,k)})+\widetilde{V}_{(m,k)}^{t}(\widetilde{Q}_{(m,k)}+A_{(m,k)}-U_{(m,k)})
&\text{if } \mathbf{p} =
\widetilde{\mathbf{p}}_m^I \\
\quad\quad-\widetilde{V}_{(m,k)}^{t}(\widetilde{Q}_{(m,k)}^I)-\widetilde{V}_{(m,k)}^{t}(\widetilde{Q}_{(m,k)})\Bigr]
&\\
\widetilde{V}_{(m,k)}^t(\widetilde{Q}_{(m,k)}) &
\text{otherwise}
\end{array}\right.\end{array}
\end{equation}
\begin{equation}
\label{eq:learn_q_factor}\begin{array}{l}
\mathbb{Q}_{(m,k)}^{t+1}(Q_{(m,k)},\mathbf{p})=\mathbb{Q}_{(m,k)}^{t}(Q_{(m,k)},\mathbf{p})+
\gamma(t)\Bigl[\beta_{(m,k)}\\
\cdot f(Q_{(m,k)})-\mathbb{Q}_{(m,k)}^t(Q_{(m,k)}^I,\mathbf{p}_m^I)-\mathbb{Q}_{(m,k)}^{t}(Q_{(m,k)},\mathbf{p})\\
+\min_{\mathbf{p}^{\prime}}\mathbb{Q}_{(m,k)}^t(Q_{(m,k)}-U_{(m,k)}+A_{(m,k)},\mathbf{p}^{\prime})\Bigr]
\end{array}
\end{equation}
where $U_{(m,k)}$ is the number of bits to be delivered for user $k$
(given in Corollary \ref{cor:dstr_policy} and depends indirectly on
the local CSI observations $\mathbf{H}_{(m,k)}$),
$\{\widetilde{Q}_{(m,k)}^I,\widetilde{\mathbf{p}}_m^I\}$ and
$\{Q_{(m,k)}^I,\mathbf{p}_m^I\}$ are the reference state and
reference ICI management pattern for the value function
$\widetilde{V}_{(m,k)}$ in (\ref{eq:user_selection}) and
$\mathbb{Q}$-factor $\mathbb{Q}_{(m,k)}$ in (\ref{eq:Q_linear})
respectively. $\gamma(n)$ is diminishing positive step size sequence
satisfying $\sum_n\gamma(n)=\infty,\sum_n\gamma^2(n)<\infty$.
~\hfill\IEEEQED
\end{Alg}

\begin{Rem}[Complexity of the Learning Algorithm]
The proposed learning scheme only requires the observations of the
local QSI $\widetilde{\mathbf{Q}}_m$ and $\mathbf{Q}_m$.
Furthermore, each users only need to feedback $U_{(m,k)}$ instead of
the local CSI $\mathbf{H}_m$, which is of similar feedback loading
compared with HSDPA systems. ~\hfill \IEEEQED
\end{Rem}


\subsection{Convergence Analysis}
In this section we will establish the convergence proof of the
proposed per-user learning algorithm \ref{alg:learning}. We first
define a mapping on the post-decision state $\widetilde{Q}_{(m,k)}$
as
\begin{equation}
\label{eq:T_{(m,k)}}\begin{array}{l}
T_{(m,k)}(\widetilde{\mathbf{V}}_{(m,k)},\widetilde{Q}_{(m,k)})=\widetilde{g}_{(m,k)}(\widetilde{Q}_{(m,k)})+\\
\sum\limits_{\widetilde{Q}_{(m,k)}^{\prime}}\Pr\{\widetilde{Q}_{(m,k)}^{\prime}|\widetilde{Q}_{(m,k)},s_{(m,k)}=1,\widetilde{\mathbf{p}}_m^I\}\widetilde{V}_{(m,k)}(\widetilde{Q}_{(m,k)}^{\prime})
\end{array}
\end{equation}
where $Q_{(m,k)}=\widetilde{Q}_{(m,k)}+A_{(m,k)}$ is the
pre-decision state, $\widetilde{g}_{(m,k)}(\widetilde{Q}_{(m,k)})=
\mathbb{E}_{A_{(m,k)}}\big[\beta_{(m,k)}f(\widetilde{Q}_{(m,k)}+A_{(m,k)})\big]$,
and
$\Pr\{\widetilde{Q}_{(m,k)}^{\prime}|\widetilde{Q}_{(m,k)},s_{(m,k)}=1,\widetilde{\mathbf{p}}_m^I\}=
\mathbb{E}_{\mathbf{H}_{(m,k)},A_{(m,k)}}[\Pr\{\widetilde{Q}_{(m,k)}^{\prime}|Q_{(m,k)},\mathbf{H}_{(m,k)},
s_{(m,k)}=1,\widetilde{\mathbf{p}}_m^I\}]$. The vector form of the
mapping is given by:
\begin{equation}
\label{eq:Tk_vector}
\mathbf{T}_{(m,k)}(\widetilde{\mathbf{V}}_{(m,k)})=\widetilde{\mathbf{g}}_{(m,k)}+\mathbf{P}_{(m,k)}\widetilde{\mathbf{V}}_m
\end{equation}
where $\mathbf{P}_{(m,k)}$ is $(N_Q+1)\times(N_Q+1)$ transition
matrix for the post-decision state queue of the user $k$. and
$\widetilde{\mathbf{V}}_{(m,k)}$ are $(N_Q+1)\times1$ vectors.
Specifically, we have the following lemma for the per-user value
function learning in (\ref{eq:learn_value_f}).
\begin{Lem}[Convergence of Per-User Value Function]\label{lem:converge_value}
The update of the per-user value function
$\widetilde{\mathbf{V}}_{(m,k)}^t$ will converge almost-surely in
the proposed learning algorithm \ref{alg:learning}, i.e.,
$\lim_{t\rightarrow\infty}\widetilde{\mathbf{V}}_{(m,k)}^t=\widetilde{\mathbf{V}}_{(m,k)}^{\infty},\forall
k,m$, and $\widetilde{V}_{(m,k)}^{\infty}(\widetilde{Q}_{(m,k)})$ is
a monotonic increasing function satisfying:
\begin{equation}
\label{eq:con_value}
\widetilde{\mathbf{V}}_{(m,k)}^{\infty}+\widetilde{V}_{(m,k)}^{\infty}(\widetilde{Q}_{(m,k)}^I)\mathbf{e}=\mathbf{T}_{(m,k)}(\widetilde{\mathbf{V}}_{(m,k)}^{\infty})
\end{equation}
\end{Lem}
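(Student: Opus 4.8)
The update \eqref{eq:learn_value_f} is a classical stochastic approximation recursion of the form $\widetilde{\mathbf{V}}^{t+1}_{(m,k)} = \widetilde{\mathbf{V}}^t_{(m,k)} + \gamma(t)\big[ F(\widetilde{\mathbf{V}}^t_{(m,k)}) - \widetilde{\mathbf{V}}^t_{(m,k)} + M^{t+1}\big]$, where the ``true'' operator is built by conditioning on the current observations. The plan is to identify this limiting operator with the map $\widehat{\mathbf{T}}_{(m,k)}(\widetilde{\mathbf{V}}_{(m,k)}) := \mathbf{T}_{(m,k)}(\widetilde{\mathbf{V}}_{(m,k)}) - \widetilde{V}_{(m,k)}(\widetilde{Q}^I_{(m,k)})\mathbf{e}$ obtained from \eqref{eq:Tk_vector} by subtracting the reference-state component. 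First I would check that the noise term $M^{t+1}$ (the difference between the sampled update and its conditional expectation given the $\sigma$-algebra $\mathcal{F}_t$ of past iterates and observations) is a martingale difference sequence with $\mathbb{E}[\|M^{t+1}\|^2 \mid \mathcal{F}_t]$ bounded: this follows because the arrival process $A_{(m,k)}$ and the fading $\mathbf{H}_{(m,k)}$ are i.i.d.\ (Assumptions \ref{ass:source_model}, \ref{ass:csi_model}), the buffer is finite ($Q_{(m,k)}\le N_Q$), and $f$ is a fixed bounded function on the finite state space, so all quantities appearing in \eqref{eq:learn_value_f} lie in a bounded set. Together with the step-size conditions $\sum_t\gamma(t)=\infty$, $\sum_t\gamma^2(t)<\infty$, this places us in the standard Robbins--Monro / ODE regime (e.g.\ Borkar, Bertsekas--Tsitsiklis).

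The next step is to show that the associated ODE $\dot{\mathbf{v}} = \widehat{\mathbf{T}}_{(m,k)}(\mathbf{v}) - \mathbf{v}$ has a unique globally asymptotically stable equilibrium, which then is exactly the fixed point $\widetilde{\mathbf{V}}^\infty_{(m,k)}$ claimed in \eqref{eq:con_value}. The key structural fact is that $\mathbf{T}_{(m,k)}$ in \eqref{eq:Tk_vector} is affine with the stochastic transition matrix $\mathbf{P}_{(m,k)}$ (rows summing to $1$, nonnegative entries) as its linear part, so $\mathbf{T}_{(m,k)}$ is nonexpansive in the $\ell_\infty$ norm but only weakly so (spectral radius $1$). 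This is the usual obstacle for average-cost relative-value recursions, and the standard remedy is the reference-state subtraction: I would argue that $\widehat{\mathbf{T}}_{(m,k)}$ acts as a contraction on the quotient space where the $\widetilde{Q}^I_{(m,k)}$-component is fixed (equivalently, that the ``span'' or the restricted map is a contraction), using irreducibility/unichain of the induced per-user queue chain under $s_{(m,k)}=1$ so that $\mathbf{P}_{(m,k)}$ has a unique stationary distribution and hence the relative-value equation \eqref{eq:con_value} has a unique solution. Monotonicity of $\widetilde{V}^\infty_{(m,k)}$ then comes from monotonicity of the initialization together with preservation of the component-wise monotone cone under $\widehat{\mathbf{T}}_{(m,k)}$ (since $f$ is increasing and larger post-decision queue leads stochastically to larger next queue), mirroring the argument already used for Theorem \ref{Thm:MDP_cond}; alternatively, uniqueness of the solution of \eqref{eq:con_value} plus the monotone structure of the exact Bellman equation \eqref{eq:bellman_post} pins it down.

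I expect the main obstacle to be rigorously establishing the contraction (or the global asymptotic stability of the ODE) despite the average-cost, relative-value nature of the recursion — the linear part $\mathbf{P}_{(m,k)}$ is not a strict contraction, so one cannot invoke the Banach fixed-point theorem directly. I would handle this by passing to the span seminorm and exploiting that under the induced chain being unichain and aperiodic (or by a standard perturbation argument since all entries are strictly positive after a few steps), $\mathbf{P}_{(m,k)}$ is a strict contraction in span seminorm, which yields both uniqueness of the fixed point of \eqref{eq:con_value} up to an additive constant and the pinning of that constant by the normalization $\widetilde{V}_{(m,k)}(\widetilde{Q}^I_{(m,k)})$ forced by the subtraction term. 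A secondary technical point is that the iterates must be shown to remain bounded (stability of the recursion) before the ODE method applies; this is immediate here because boundedness of $f$ and finiteness of the state space force $\widehat{\mathbf{T}}_{(m,k)}$ to map a fixed large box into itself, so the iterates never escape. The remaining details — verifying the Lipschitz/measurability hypotheses of the stochastic approximation theorem and identifying $\widetilde{\mathbf{V}}^\infty_{(m,k)}$ with the solution of \eqref{eq:con_value} — are routine, and I would defer them to the appendix.
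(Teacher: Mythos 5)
Your overall route is the same as the paper's: recast (\ref{eq:learn_value_f}) as a Robbins--Monro recursion driven by the map $\mathbf{T}_{(m,k)}$ minus the reference-state offset, verify the martingale-difference property and conditional second-moment bound of the noise, pass to the limiting ODE $\dot{\widetilde{\mathbf{V}}}_{(m,k)}=\mathbf{T}_{(m,k)}(\widetilde{\mathbf{V}}_{(m,k)})-\widetilde{\mathbf{V}}_{(m,k)}-\widetilde{V}_{(m,k)}(\widetilde{Q}_{(m,k)}^I)\mathbf{e}$, identify its unique equilibrium with (\ref{eq:con_value}), and get monotonicity of the limit by the same induction as in Theorem~\ref{Thm:MDP_cond}. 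Where you diverge is in the two technical pillars: you propose span-seminorm contraction of $\mathbf{P}_{(m,k)}$ for global asymptotic stability (the paper instead cites the known stability result for relative-value-iteration ODEs with nonexpansive affine maps), and you dismiss boundedness of the iterates as immediate.

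The boundedness step is a genuine gap, and your justification for it is false. The map $\widehat{\mathbf{T}}_{(m,k)}(\widetilde{\mathbf{V}})=\widetilde{\mathbf{g}}_{(m,k)}+\mathbf{P}_{(m,k)}\widetilde{\mathbf{V}}-\widetilde{V}(\widetilde{Q}_{(m,k)}^I)\mathbf{e}$ does \emph{not} map any fixed box into itself: if $a\le \widetilde{V}(Q)\le b$ componentwise, then $\widehat{\mathbf{T}}_{(m,k)}(\widetilde{\mathbf{V}})(Q)$ ranges over $[\,g_{\min}+a-b,\; g_{\max}+b-a\,]$, and invariance would require simultaneously $b-a\le g_{\min}$ and $g_{\max}\le a-\,(a-b)$, i.e.\ $a\ge g_{\max}$ and $b\le g_{\min}$, which is contradictory; the sampled update $\widetilde{V}^{t}(\widetilde{Q}+A-U)-\widetilde{V}^{t}(\widetilde{Q}^I_{(m,k)})$ can likewise grow with the span of the current iterate, so ``the iterates never escape a large box'' does not hold by inspection. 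This is exactly why the paper devotes the middle of Appendix~C to the Borkar--Meyn stability argument: it forms the scaled fields $h_r(\widetilde{\mathbf{V}})=h(r\widetilde{\mathbf{V}})/r$, identifies $h_\infty(\widetilde{\mathbf{V}})=\mathbf{P}_{(m,k)}\widetilde{\mathbf{V}}-\widetilde{\mathbf{V}}-\widetilde{V}(\widetilde{Q}_{(m,k)}^I)\mathbf{e}$ (the case $\widetilde{\mathbf{g}}_{(m,k)}=\mathbf{0}$), shows the origin is globally asymptotically stable for $\dot{\widetilde{\mathbf{V}}}=h_\infty(\widetilde{\mathbf{V}})$, and only then concludes almost-sure boundedness via the cited stability theorem before applying the ODE convergence result. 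Separately, your span-contraction argument for stability needs the per-user post-decision queue chain under $s_{(m,k)}=1$ and $\widetilde{\mathbf{p}}_m^I$ to be not only unichain but to have a strictly positive ergodicity coefficient after finitely many steps (aperiodicity), which you assert but do not verify for the general arrival/channel distributions assumed in the paper; the paper sidesteps this by invoking the established global asymptotic stability result for this class of nonexpansive relative-value ODEs rather than proving a contraction. With the boundedness step repaired along these lines, the rest of your argument would go through.
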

\begin{proof}
Please refer to Appendix C. 
\end{proof}

Note that (\ref{eq:con_value}) is equivalent to the per-user fixed
point equation in (\ref{eq:user_selection}). This result illustrates
that the proposed online distributive learning in
(\ref{eq:learn_value_f}) can converge to the target per-user fixed
point solution in (\ref{eq:user_selection}). We define a mapping for
the per-user $\mathbb{Q}$-factor $\mathbb{Q}_{(m,k)}$ as
\begin{equation}
\label{eq:Q_map}\begin{array}{l}
T_{(m,k)}^{\mathbb{Q}}(\mathbb{Q}_{(m,k)},Q_{(m,k)},\mathbf{p})=
\beta_{(m,k)}f(Q_{(m,k)})+\sum\limits_{Q_{(m,k)}^{\prime}}\\
\Big[\Pr\{Q_{(m,k)}^{\prime}|Q_{(m,k)},s_{(m,k)}=1,
\mathbf{p}\}\min\limits_{\mathbf{p}^{\prime}}\mathbb{Q}_{(m,k)}(Q_{(m,k)}^{\prime},\mathbf{p}^{\prime})\Big]
\end{array}
\end{equation}
Specifically, we have following lemma for the $\mathbb{Q}$-factor
online learning in (\ref{eq:learn_q_factor}).
\begin{Lem}[Convergence of the Per-User $\mathbb{Q}$-factor]\label{lem:q_converge} The update of per-user
$\mathbb{Q}$-factor $\mathbb{Q}_{(m,k)}$ will converge almost-surely
in the proposed learning algorithm \ref{alg:learning}, i.e.,
$\lim_{t\rightarrow\infty}\mathbb{Q}_{(m,k)}^t=\mathbb{Q}_{(m,k)}^{\infty},\forall
k,m$, where the steady state $\mathbb{Q}$-factor
$\{\mathbb{Q}_{(m,k)}^{\infty}\}$ satisfy:
\begin{equation}
\label{eq:q_infty}\begin{array}{l}
\mathbb{Q}_{(m,k)}^{\infty}(Q_{(m,k)},\mathbf{p})=\\
\quad\quad
T_{(m,k)}^{\mathbb{Q}}(\mathbb{Q}_{(m,k)}^{\infty},Q_{(m,k)},\mathbf{p})-\mathbb{Q}_{(m,k)}^{\infty}(Q_{(m,k)}^I,\mathbf{p}^I_m)
\end{array}
\end{equation}
\begin{proof}
Please refer to Appendix D. 
\end{proof}
\end{Lem}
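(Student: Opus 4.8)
The plan is to view the per-user $\mathbb{Q}$-factor recursion (\ref{eq:learn_q_factor}) as an \emph{asynchronous stochastic approximation} iteration and to analyze it by the ODE method, following the same template as the proof of Lemma \ref{lem:converge_value} but now contending with the nonlinearity introduced by the $\min_{\mathbf{p}'}$ operator. First I would put (\ref{eq:learn_q_factor}) into the canonical form
\begin{equation}
\mathbb{Q}_{(m,k)}^{t+1}(Q,\mathbf{p}) = \mathbb{Q}_{(m,k)}^{t}(Q,\mathbf{p}) + \gamma(t)\,\nu^t(Q,\mathbf{p})\big[F(\mathbb{Q}_{(m,k)}^{t})(Q,\mathbf{p}) - \mathbb{Q}_{(m,k)}^{t}(Q,\mathbf{p}) + M^{t+1}\big],
\end{equation}
where $\nu^t(Q,\mathbf{p})=\mathbf{1}\{(Q,\mathbf{p})=(Q_{(m,k)}^t,\mathbf{p}^t)\}$ indicates the single component updated at slot $t$, the drift is $F(\mathbb{Q})(Q,\mathbf{p}) = T^{\mathbb{Q}}_{(m,k)}(\mathbb{Q},Q,\mathbf{p}) - \mathbb{Q}(Q_{(m,k)}^I,\mathbf{p}^I_m)$ with $T^{\mathbb{Q}}_{(m,k)}$ as in (\ref{eq:Q_map}), and $M^{t+1}$ is the difference between the sampled $\min_{\mathbf{p}'}\mathbb{Q}^t_{(m,k)}(Q_{(m,k)}-U_{(m,k)}+A_{(m,k)},\mathbf{p}')$ and its conditional expectation given the history. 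Observe that the $U_{(m,k)}$ entering the update is the ``if scheduled'' departure of Corollary \ref{cor:dstr_policy}, a deterministic function of the observed local CSI and of $\mathbf{p}^t$ only; hence the drift $F$ does \emph{not} depend on the value-function iterate $\widetilde{\mathbf{V}}$, and the only coupling to the faster-varying quantities is through which component is updated at each slot, which I would treat as controlled Markovian noise.

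Next I would check the standard hypotheses. (i) The step sizes obey $\sum_t\gamma(t)=\infty$, $\sum_t\gamma^2(t)<\infty$ by Algorithm \ref{alg:learning}. (ii) $\{M^{t+1}\}$ is a martingale-difference sequence with $\mathbb{E}[\,\|M^{t+1}\|^2\mid\mathcal{F}_t]\le C_1(1+\|\mathbb{Q}^t_{(m,k)}\|^2)$, which is immediate because the per-user state set $\{0,\dots,N_Q\}$ and the pattern set $\mathcal{P}$ are finite and $M^{t+1}$ is affine in $\mathbb{Q}^t_{(m,k)}$. (iii) Every pair $(Q,\mathbf{p})$ is updated infinitely often with comparable relative frequency: the per-user queue forms an ergodic finite chain because $A_{(m,k)}$ is i.i.d.\ with nondegenerate law and the buffer is finite, so all queue lengths recur, while every pattern is visited infinitely often under the ICI rule (\ref{eq:learn_pattern_pol}) (augmented, if needed, with vanishing exploration over $\mathcal{P}$). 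These place us squarely in the asynchronous stochastic approximation setting.

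The core of the argument is the limiting ODE $\dot{\mathbb{Q}} = F(\mathbb{Q}) - \mathbb{Q}$ together with a.s.\ boundedness of the iterates. The structural fact I would establish first is that $\mathbb{Q}\mapsto T^{\mathbb{Q}}_{(m,k)}(\mathbb{Q},\cdot,\cdot)$ is \emph{non-expansive in the sup-norm}, since each row of $\Pr\{\cdot\mid Q,s_{(m,k)}=1,\mathbf{p}\}$ sums to one and $\mathbf{x}\mapsto\min_{\mathbf{p}'}x(\cdot,\mathbf{p}')$ is non-expansive; hence $F$ is non-expansive as well. From this I would: (a) show that (\ref{eq:q_infty}) has a unique solution by recasting the reference-state-normalized average-cost equation as a stochastic-shortest-path fixed point for the (unichain) per-user model, which is a contraction in a weighted sup-norm --- this is the relative-value/$\mathbb{Q}$-learning argument of Abounadi, Bertsekas and Borkar; (b) conclude global asymptotic stability of $\dot{\mathbb{Q}}=F(\mathbb{Q})-\mathbb{Q}$ toward $\mathbb{Q}^{\infty}_{(m,k)}$, using non-expansiveness of $F$ and the normalization that eliminates drift along the all-ones direction; and (c) obtain $\sup_t\|\mathbb{Q}^t_{(m,k)}\|<\infty$ a.s.\ via the Borkar--Meyn criterion, examining the scaled field $F_\infty(\mathbb{Q})=\lim_{r\to\infty}F(r\mathbb{Q})/r$, which is the same map with the cost term $\beta_{(m,k)}f(\cdot)$ deleted, still non-expansive and with the origin as its unique globally asymptotically stable point. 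Invoking the asynchronous stochastic approximation theorem then gives $\mathbb{Q}^t_{(m,k)}\to\mathbb{Q}^{\infty}_{(m,k)}$ a.s., with $\mathbb{Q}^{\infty}_{(m,k)}$ satisfying (\ref{eq:q_infty}), i.e.\ exactly the fixed point of the per-user equation (\ref{eq:Q_linear}).

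I expect steps (b)--(c) to be the main obstacle: in contrast to the value-function recursion of Lemma \ref{lem:converge_value}, whose mapping $\mathbf{T}_{(m,k)}$ is \emph{linear}, the $\min_{\mathbf{p}'}$ here makes $F$ only non-expansive rather than a contraction, so neither global asymptotic stability of the ODE nor boundedness of the iterates is automatic. The resolution is the reference-state normalization: via the stochastic-shortest-path reformulation, which is valid because the per-user chain is unichain, it restores a genuine weighted-sup-norm contraction and thereby delivers both properties. A secondary point needing care is the infinite-visits requirement for the patterns under the argmin rule (\ref{eq:learn_pattern_pol}), which I would guarantee by a standing (arbitrarily small) exploration assumption on $\mathcal{P}$.
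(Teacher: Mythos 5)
Your proposal is correct and follows essentially the same route as the paper's Appendix D: recast (\ref{eq:learn_q_factor}) as a stochastic approximation with martingale-difference noise, analyze the limiting ODE $\dot{\mathbb{Q}}_{(m,k)}=\mathbf{T}_{(m,k)}^{\mathbb{Q}}(\mathbb{Q}_{(m,k)})-\mathbb{Q}_{(m,k)}-\mathbb{Q}_{(m,k)}(Q_{(m,k)}^I,\mathbf{p}_m^I)\mathbf{e}$, invoke the relative $\mathbb{Q}$-learning results of Abounadi--Bertsekas--Borkar for uniqueness and global asymptotic stability of the fixed point, and use the Borkar--Meyn scaled-ODE criterion for almost-sure boundedness. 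Your extra care about asynchronous component updates and pattern exploration is a reasonable refinement of details the paper handles implicitly by treating the update as equivalent to a synchronous recursion.
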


Note that (\ref{eq:q_infty}) is equivalent to the per-user fixed
point equation for $\mathbb{Q}_{(m,k)}$ in (\ref{eq:Q_linear}). This
result illustrates that the proposed online distributive learning in
(\ref{eq:learn_q_factor}) can converge to the target per user fixed
point solution in (\ref{eq:Q_linear}).

Lemma \ref{lem:converge_value} and \ref{lem:q_converge} only
established the convergence of the proposed online learning
algorithm. Strictly speaking, the converged result is not optimal
due to the linear approximation of the value function
$\widetilde{V}(\widetilde{\mathbf{Q}})$ and the $\mathbb{Q}$-factor
$\mathbb{Q}(\mathbf{Q},\mathbf{p})$ in (\ref{eq:linear_value}) and
(\ref{eq:q_approx}) respectively. The linear approximation is needed
for distributive implementation. As illustrated in Fig.
\ref{fig:opt}, the proposed distributive solution has
close-to-optimal performance compared with brute-force centralized
solution of the Bellman equation in (\ref{eq:bellman_cond}).

\begin{figure}
 \begin{center}
 \resizebox{9cm}{!}{\includegraphics{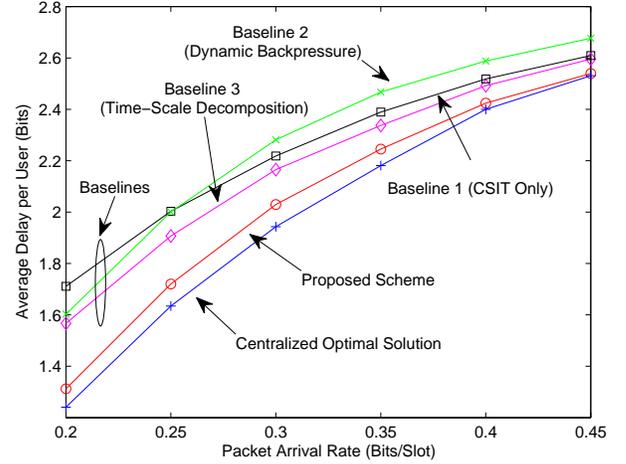}}
 \end{center}
    \caption{Average delay per user versus per user loading $\lambda_{(m,k)}$ in the
Example 1 with the source arrival model is given by
$\Pr\{A_{(m,k)}=1\}=\lambda_{(m,k)}$ and
$\Pr\{A_{(m,k)}=0\}=1-\lambda_{(m,k)}$ for all $m,k$, and the buffer
size $N_Q=3$. Centralized optimal solution refers to the brute-force
centralized solution of the Bellman equation in
(\ref{eq:bellman_cond}). Baseline 1 refers to the {\em CSIT only}
scheme, where the user scheduling are adaptive to the CSIT only.
Baseline 2 refers to the {\em Dynamic Backpressure} scheme
\cite{Georgiadis-Neely-Tassiulas:2006}. Baseline 3 refers to the
{\em time-scale decomposition} scheme proposed in
\cite{pattern:2009}. }
    \label{fig:opt}
\end{figure}

\section{Simulation and Discussion}
\label{sec:simulation}

In this section, we shall compare the proposed distributive
queue-aware intra-cell user scheduling and ICI management control
scheme with three baselines. Baseline 1 refers to the {\em CSIT
only} scheme, where the user scheduling are adaptive to the CSIT
only so as to optimize the achievable data rate. Baseline 2 refers
to a throughput optimal policy (in stability sense) for the user
scheduling, namely the {\em Dynamic Backpressure} scheme
\cite{Georgiadis-Neely-Tassiulas:2006}. In both baseline 1 and 2,
the traditional frequency reuse scheme (frequency reuse factor
equals 3) is used for inter-cell interference management. Baseline 3
refers to the {\em time-scale decomposition} scheme proposed in
\cite{pattern:2009}, where the sets of possible ICI management
patterns $\mathcal{P}$ is the same as the proposed scheme. In the
simulation, we consider a two-tier celluar network composed of 19
BSs as in \cite{pattern:2009}, each has a coverage of 500m. Channel
models are implemented according to the Urban Macrocell Model in
3GPP and Jakes' Rayleigh fading model. Specifically, the path loss
model is given by $PL = 34.5 + 35\log_{10}(r)$, where $r$ (in m) is
the distance from the transmitter to the receiver. The total BW is
10MHz. We consider Poisson packet arrival with average arrival rate
$\mathbb{E}[A_{(m,k)}]=\lambda_{(m,k)}$ (packets/slot) and
exponentially distributed random packet size $\overline{N}_{(m,k)}$
with $\mathbb{E}[\overline{N}_{(m,k)}]=5$Mbits. The scheduling slot
duration $\tau$ is 5ms. The maximum buffer size $N_Q$ is 9 (in
packets), where each user's QSI is partitioned into 4 regions, given
by $\big\{\{0,1,2\};\{3,4,5\};\{6,7\};\{8,9\}\big\}$. The cost
function is given by
$f(Q_{(m,k)})=\frac{Q_{(m,k)}}{\lambda_{(m,k)}}$ for all the users
in the simulations.

\subsection{Performance w.r.t. Transmit Power}
Fig.\ref{fig:delay_tx_pwr} and Fig.\ref{fig:delay_full_buffer}
illustrate the performance of average delay and packet dropping
probability (conditioned on packet arrival) per user versus transmit
power $P_{\max}^m$ respectively. The number of users per BS $K=3$,
and the average arrival rate $\lambda_{(m,k)}=1$. Note that the
average delay and packet dropping probability of all the schemes
decreases as the transmit power increases, and there is significant
performance gain of the proposed scheme compared to all baselines.
This gain is contributed by the QSI-aware user scheduling as well as
ICI management control.

\begin{figure}
 \begin{center}
  \resizebox{9cm}{!}{\includegraphics{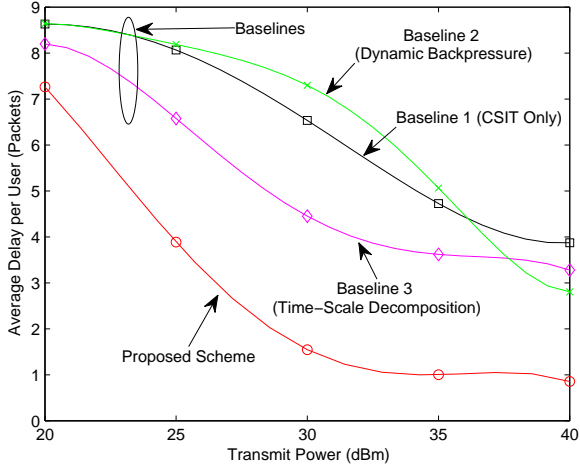}}
 \end{center}
    \caption{Average delay per user versus transmit power $P_{\max}^m$.
    The number of users per BS is $K=3$. The average arrival rate $\lambda_{(m,k)}=1$ (packets/slot). The maximum buffer size $N_Q$ is 9, where each user's QSI is
partitioned into 4 regions, given by
$\big\{\{0,1,2\};\{3,4,5\};\{6,7\};\{8,9\}\big\}$.}
    \label{fig:delay_tx_pwr}
\end{figure}

\begin{figure}
 \begin{center}
  \resizebox{9cm}{!}{\includegraphics{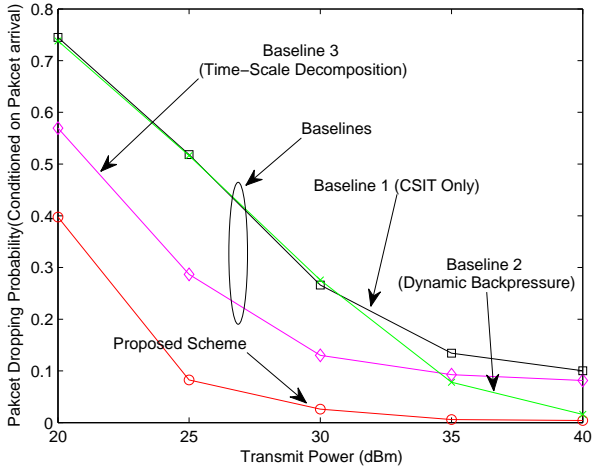}}
 \end{center}
    \caption{Packet dropping probability (conditioned on packet arrival) per user versus transmit power $P_{\max}^m$.
    The number of users per BS is $K=3$. The average arrival rate $\lambda_{(m,k)}=1$ (packets/slot). The maximum buffer size $N_Q$ is 9, where each user's QSI is
partitioned into 4 regions, given by
$\big\{\{0,1,2\};\{3,4,5\};\{6,7\};\{8,9\}\big\}$.}
    \label{fig:delay_full_buffer}
\end{figure}

\subsection{Performance w.r.t. Loading}
Fig.\ref{fig:delay_loading} illustrates the average delay versus per
user loading (average arrival rate $\lambda_{(m,k)}$) at transmit
power of $P_{\max}^m=30$dBm and the number of users per BS $K=3$. It
can also be observed that the proposed scheme achieved significant
gain over all the baselines across a wide range of input loading.

\begin{figure}
 \begin{center}
  \resizebox{9cm}{!}{\includegraphics{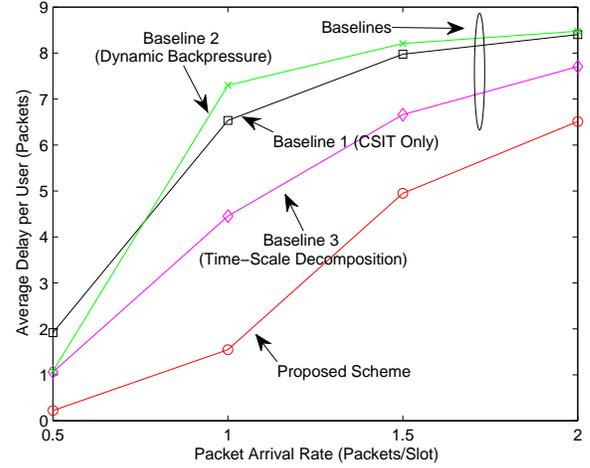}}
 \end{center}
    \caption{Average delay per user versus per user loading
    $\lambda_{(m,k)}$. The transmit power $P_{\max}^m=30$dBm.
    The number of users per BS is $K=3$. The maximum buffer size $N_Q$ is 9, where each user's
QSI is partitioned into 4 regions, given by
$\big\{\{0,1,2\};\{3,4,5\};\{6,7\};\{8,9\}\big\}$.}
    \label{fig:delay_loading}
\end{figure}

\subsection{Cumulative Distribution Function (CDF) of the Queue Length}
Fig.\ref{fig:delay_cdf} illustrates the Cumulative Distribution
Function (CDF) of the queue length per user with transmit power
$P_{\max}^m=25$dBm. The number of users per BS is $K=3$ and the
average arrival rate $\lambda_{(m,k)}=1$. It can be also be verified
that the proposed scheme achieves not only a smaller average delay
but also a smaller delay percentile compared with the other
baselines.

\begin{figure}
 \begin{center}
  \resizebox{9cm}{!}{\includegraphics{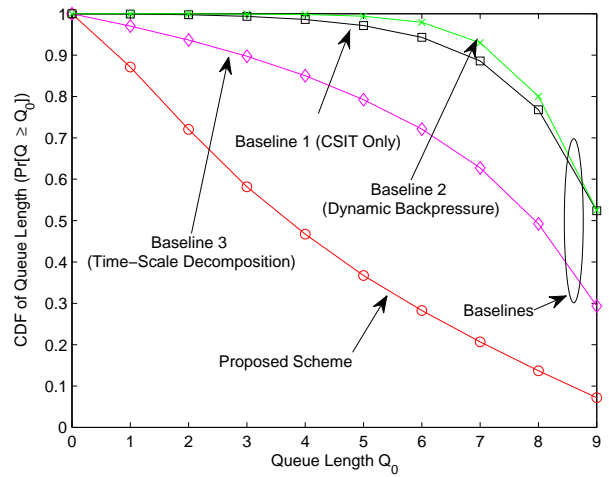}}
 \end{center}
    \caption{Cumulative Distribution Function (CDF) of the queue length per user with transmit power $P_{\max}^m=25$dBm.
    The number of users per BS is $K=3$. The
average arrival rate $\lambda_{(m,k)}=1$. The maximum buffer size
$N_Q$ is 9, where each user's QSI is partitioned into 4 regions,
given by $\big\{\{0,1,2\};\{3,4,5\};\{6,7\};\{8,9\}\big\}$. }
    \label{fig:delay_cdf}
\end{figure}

\subsection{Convergence Performance}
Fig.\ref{fig:delay_learning_vq} illustrates the average delay per
user versus the scheduling slot index with transmit power
$P_{\max}^m=35$dBm. The number of users per BS is $K=3$ and the
average arrival rate $\lambda_{(m,k)}=1.5$. It can be observed that
the convergence rate of the online algorithm is quite fast. For
example, the delay performance of the proposed scheme already
out-performs all the baselines at the $400$-th slot. Furthermore,
the delay performance at $400$-th slot is already quite close to the
converged average delay. Finally, unlike the conventional iterative
NUM approach where the iterations are done offline within the
coherence time of the CSI, the proposed iterative algorithm is
updated over the same time scale of the CSI and QSI updates.
Moreover, the iterative algorithm is online, meaning that useful
payload are transmitted during the iterations.

\begin{figure}
 \begin{center}
  \resizebox{9.5cm}{!}{\includegraphics{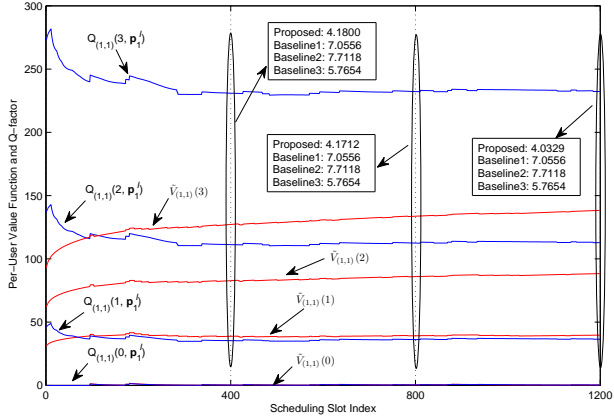}}
 \end{center}
    \caption{Convergence property of the proposed distributive stochastic learning algorithm via stochastic learning. The
transmit power $P_{\max}^m=35$dBm. The number of users per BS is
$K=3$. The average arrival rate $\lambda_{(m,k)}=1.5$. The maximum
buffer size $N_Q$ is 9, where each user's QSI is partitioned into 4
regions, given by $\big\{\{0,1,2\};\{3,4,5\};\{6,7\};\{8,9\}\big\}$.
The figure illustrates instantaneous per-user value function
$\widetilde{V}_{(1,1)}(\widetilde{Q}_{(1,1)})$ and
$\mathbb{Q}$-factor $\mathbb{Q}_{(1,1)}(Q_{(1,1)},\mathbf{p}_1^I)$
versus instantaneous slot index. The boxes indicated the average
delay of various schemes at three selected slot indices.}
    \label{fig:delay_learning_vq}
\end{figure}

\section{Summary}\label{sec:summary}
In this paper, we study the design of a distributive queue-aware
intra-cell user scheduling and inter-cell interference management
control design for a delay-optimal celluar downlink system. We first
model the problem as an infinite horizon average reward POMDP, which
is NP-hard in general. By exploiting special problem structure, we
derive an equivalent Bellman equation to solve the POMDP problem. To
address the distributive requirement and the issue of dimensionality
and computation complexity, we derive a distributive online
stochastic learning algorithm, which only requires local QSI and
local CSI at each of the $M$ BSs. We show that the proposed learning
algorithm converges almost-surely and has significant gain compared
with various baselines. The proposed algorithm only has linear
complexity order $O(MK)$.

\appendices
\section*{Appendix A: Proof of Theorem~\ref{Thm:MDP_cond}}
\label{app:V_increase} Based on the action partitioning, we can
associate the MDP formulation in our delay-optimal control problem
as follows:
\begin{itemize}
\item{\bf State Space:} The system state of the MDP is global QSI
$\mathbf{Q}\in\mathcal{Q}$.
\item{\bf Action Space:} The action on the system state $\mathbf{Q}$ is
the partitioned action $\Omega(\mathbf{Q})$ given in Definition
\ref{def:partitioned action}, and the action space is
$\{\mathcal{P},\mathcal{S}\}$.
\item{\bf Transition Kernel:} The transition kernel
is
$\Pr\{\mathbf{Q}^{\prime}|\mathbf{Q},\Omega(\mathbf{Q})\}=\mathbb{E}_{\mathbf{H}}\left[\Pr\{\mathbf{Q}^{\prime}|\mathbf{Q},\mathbf{H},\Omega(\boldsymbol{\chi})\}\right]$,
where
$\Pr\{\mathbf{Q}^{\prime}|\mathbf{Q},\mathbf{H},\Omega(\boldsymbol{\chi})\}$
is given by (\ref{eq:sys_tran}).
\item{\bf Per-Slot Cost:} The per-slot cost function is
$\hat{g}(\mathbf{Q},\Omega(\mathbf{Q}))=\mathbb{E}_{\mathbf{H}}[g(\mathbf{Q},\mathbf{H},\Omega(\boldsymbol{\chi}))]=\sum_{m,k}\beta_{(m,k)}f(Q_{(m,k)})$.
\end{itemize}

Therefore, the optimal partitioned action $\Omega^*(\mathbf{Q})$ can
be determined from the equivalent Bellman equation in
(\ref{eq:bellman_cond}).

Next, we shall prove that $V(\mathbf{Q})$ is a monotonic increasing
function w.r.t. its component. Given the $V^l(\mathbf{Q})$ is the
result of $l$-th iteration, $V^{l+1}(\mathbf{Q})$ is given by:
\begin{equation}
V^{l+1}(\mathbf{Q})=T_{\Omega}(\mathbf{V}^{l},\mathbf{Q})-T_{\Omega}(\mathbf{V}^{l},\mathbf{Q}^I)
\end{equation}
where
$T_{\Omega}(\mathbf{V}^{l},\mathbf{Q})=\min\limits_{\Omega(\mathbf{Q})}\big[\widetilde{g}(\mathbf{Q},\Omega(\mathbf{Q}))+
\sum_{\mathbf{Q}^{\prime}}\Pr\{\mathbf{Q}^{\prime}|\mathbf{Q},\Omega(\mathbf{Q})\}V^l(\mathbf{Q}^{\prime})\big]$,
and $\mathbf{Q}^I$ is a reference state. Because
$\lim_{l\rightarrow\infty}V^l(\mathbf{Q})=V(\mathbf{Q})$
\cite{Bertsekas:2007}, it is sufficient to prove
$V^l(\mathbf{Q}),\forall l$ is component-wise monotonic increasing.
Using the induction method, we start from $V^0(\mathbf{Q})=0,
\forall \mathbf{Q}$. In the induction step, we assume that $\forall
\mathbf{Q}^1\succ\mathbf{Q}^2,V^l(\mathbf{Q}^1)>V^l(\mathbf{Q}^2)$,
we get
\begin{equation}\begin{array}{ll}
&V^{l+1}(\mathbf{Q}^1)+T_{\Omega}(\mathbf{V}^{l},\mathbf{Q}^I)\\
=&\min\limits_{\Omega(\mathbf{Q}^1)}\left[\widetilde{g}(\mathbf{Q}^1,\Omega(\mathbf{Q}^1))+
\sum\limits_{\mathbf{Q}^{\prime}}\Pr\{\mathbf{Q}^{\prime}|\mathbf{Q}^1,\Omega(\mathbf{Q}^1)\}V^l(\mathbf{Q}^{\prime})\right]\\
>&\sum\limits_{m,k}\beta_{(m,k)}f(Q_{(m,k)}^2)+\sum\limits_{\mathbf{A}}\Pr\{\mathbf{A}\}\mathbb{E}_{\mathbf{H}}
[V(\mathbf{Q}^2-\mathbf{U}^*+\mathbf{A})]\\
\geq&\min\limits_{\Omega(\mathbf{Q}^2)}\left[\widetilde{g}(\mathbf{Q}^2,\Omega(\mathbf{Q}^2))+
\sum\limits_{\mathbf{Q}^{\prime}}\Pr\{\mathbf{Q}^{\prime}|\mathbf{Q}^2,\Omega(\mathbf{Q}^2)\}V^l(\mathbf{Q}^{\prime})\right]\\
=&V^{l+1}(\mathbf{Q}^2)+T_{\Omega}(\mathbf{V}^{l},\mathbf{Q}^I)
\end{array}
\end{equation}
where $\mathbf{U}^*$ is the delivered bits under the conditional
action $\Omega^*(\mathbf{Q}^1)=\{\mathbf{p}^*,\mathbf{s}^*\}$ for
all users. Specifically, $U_{(m,k)}(t)=
R_{(m,k)}(\mathbf{H},\mathbf{p}^*,\mathbf{s}^*)\tau$.

\section*{Appendix B: Proof of Corollary~\ref{cor:dstr_policy}}\label{app:dstr_policy}
Using the linear approximation in (\ref{eq:linear_value}), and the
given ICI management pattern $\mathbf{p}$, the optimal user
scheduling action $\mathbf{s}$ (obtained by solving the RHS of
Bellman equation (\ref{eq:bellman_post})) is:
\begin{equation}\begin{array}{ll}
&\min_{\mathbf{s}(\mathbf{Q},\mathbf{H})\in\mathcal{S}}\Big[\widetilde{g}(\mathbf{Q},\mathbf{P},\mathbf{s}(\mathbf{Q},\mathbf{H}))+\\
&\quad\quad\sum_{\widetilde{\mathbf{Q}}^{\prime}}\Pr\{\widetilde{\mathbf{Q}}^{\prime}|\mathbf{Q},\mathbf{p},\mathbf{s}(\mathbf{Q},\mathbf{H})\}\widetilde{V}(\widetilde{\mathbf{Q}}^{\prime})
\Big]\\
\Rightarrow&\min_{\mathbf{s}(\mathbf{Q},\mathbf{H})\in\mathcal{S}}\bigg[
\sum_{m,k}\Big(\widetilde{V}_{(m,k)}(Q_{(m,k)})(1-s_{(m,k)})+\\
&\quad\quad\widetilde{V}_{(m,k)}((Q_{(m,k)}-U_{(m,k)})^+)s_{(m,k)}\Big)\bigg]\\
\Rightarrow&\max_{\mathbf{s}_m\in\mathcal{S}_m}\sum_{k\in\mathcal{K}_m}\Big(\widetilde{V}_{(m,k)}(Q_{(m,k)})-\\
&\quad\quad\widetilde{V}_{(m,k)}((Q_{(m,k)}-U_{(m,k)})^+)\Big)s_{(m,k)},
\forall m\in\mathcal{M}_{\mathbf{p}}
\end{array}
\end{equation}
where
$\mathcal{S}_m=\{\mathbf{s}_m:\sum_{k\in\mathcal{K}_m}s_{(m,k)}=1,s_{(m,k)}\in\{0,1\}\}$
is the set of all the possible user scheduling policy for BS $m$. As
a result, Corollary \ref{cor:dstr_policy} is obvious from the above
equation.

\section*{Appendix C: Proof of
Lemma~\ref{lem:converge_value}}\label{app:value_converge} From the
definition of mapping $T_{(m,k)}$ in (\ref{eq:T_{(m,k)}}), the
convergence property of the per-user value function update algorithm
in (\ref{eq:learn_value_f}) is equivalent to the following update
equation\cite{Borkar:2008}:
\begin{equation}
\label{eq:syn_learning}\begin{array}{l}
\widetilde{V}_{(m,k)}^{t+1}(\widetilde{Q}_{(m,k)})=\\
\quad\widetilde{V}_{(m,k)}^t(\widetilde{Q}_{(m,k)})+
\gamma(t)\Bigl[T_{(m,k)}(\widetilde{V}_{(m,k)}^{t},\widetilde{Q}_{(m,k)})-\\
\quad\widetilde{V}_{(m,k)}^{t}(\widetilde{Q}_{(m,k)}^I)
-\widetilde{V}_{(m,k)}^{t}(\widetilde{Q}_{(m,k)})+
Z^{t+1}_{(m,k)}(\widetilde{Q}_{(m,k)})\Bigr]
\end{array}
\end{equation}
where
$\widetilde{Z}^{t+1}_{(m,k)}(\widetilde{Q}_{(m,k)})=\beta_{(m,k)}f(\widetilde{Q}_{(m,k)}+A_{(m,k)})
+\widetilde{V}_{(m,k)}^{t}(Q_{(m,k)}^{\prime})-
T_{(m,k)}(\widetilde{V}_{(m,k)}^{t},\widetilde{Q}_{(m,k)})$, and
$Q_{(m,k)}^{\prime}=\widetilde{Q}_{(m,k)}+A_{(m,k)}-U_{(m,k)}$.
$U_{(m,k)}$ is determined by the ICI management control pattern
$\mathbf{\widetilde{p}}_m^I$ and local CSI $\mathbf{H}_{(m,k)}$. Let
$\mathbb{F}_t=\sigma(\mathbf{\widetilde{V}}_{(m,k)}^l,\mathbf{\widetilde{Z}}_{(m,k)}^l,l\leq
t)$ be the $\sigma$-algebra generated by
$\{\mathbf{\widetilde{V}}_{(m,k)}^l,\mathbf{\widetilde{Z}}_{(m,k)}^l,l\leq
t\}$, It can be verified that
$\mathbb{E}_{\{\mathbf{H}_{(m,k)},A_{(m,k)}\}}[
\mathbf{\widetilde{Z}}^{t+1}_{(m,k)}|\mathbb{F}_t]=0$, and
$\mathbb{E}_{\{\mathbf{H}_{(m,k)},A_{(m,k)}\}}[
||\mathbf{\widetilde{Z}}^{t+1}_{(m,k)}||^2|\mathbb{F}_t]\leq
C_1(1+||\widetilde{\mathbf{V}}_{(m,k)}^t||^2)$ for a suitable
constant $C_1$. Therefore, the learning algorithm in
(\ref{eq:syn_learning}) is a standard stochastic learning algorithm
with the Martingale difference noise
$\mathbf{\widetilde{Z}}^{t+1}_{(m,k)}$. We use the ordinary
differential equation (ODE) to analyze the convergence probability.
Specifically, the limiting ODE associated for
(\ref{eq:syn_learning}) to track asymptotically is given by:
\begin{equation}
\label{eq:ode_syn}\begin{array}{l}
\dot{\widetilde{\mathbf{V}}}_{(m,k)}(t)=\mathbf{T}_{(m,k)}(\widetilde{\mathbf{V}}_{(m,k)}(t))-\widetilde{\mathbf{V}}_{(m,k)}(t)-\\
\quad\quad\quad\quad\widetilde{V}_{(m,k)}(\widetilde{Q}_{(m,k)}^I,t)\mathbf{e}=h(\widetilde{\mathbf{V}}_{(m,k)}(t))
\end{array}
\end{equation}
Note that there is a unique fixed point
$\widetilde{\mathbf{V}}^*_{(m,k)}$ that satisfies the Bellman
equation
\begin{equation}
\mathbf{T}_{(m,k)}(\widetilde{\mathbf{V}}^*_{(m,k)})-\widetilde{\mathbf{V}}^*_{(m,k)}-\widetilde{V}_{(m,k)}^{*}(\widetilde{Q}_{(m,k)}^I)\mathbf{e}=0
\end{equation}
and it is proved in \cite{SA:ODE:globally} that
$\widetilde{\mathbf{V}}^*_{(m,k)}$ is the globally asymptotically
stable equilibrium for (\ref{eq:ode_syn}). Furthermore, define
$h_r(\widetilde{\mathbf{V}}_{(m,k)})=h(r\widetilde{\mathbf{V}}_{(m,k)})/r,
\forall r>0$ and
$h_{\infty}(\widetilde{\mathbf{V}}_{(m,k)})=\lim_{r\rightarrow\infty}h_r(\widetilde{\mathbf{V}}_{(m,k)})=\mathbf{P}_{(m,k)}\widetilde{\mathbf{V}}_{(m,k)}-\widetilde{\mathbf{V}}_{(m,k)}-\widetilde{V}_{(m,k)}(\widetilde{Q}_{(m,k)}^I)\mathbf{e}$.
The origin is the globally asymptotically stable equilibrium point
of the ODE
$\dot{\widetilde{\mathbf{V}}}_{(m,k)}(t)=h_{\infty}(\widetilde{\mathbf{V}}_{(m,k)}(t))$
(This is merely a special case by setting
$\widetilde{\mathbf{g}}_{(m,k)}=\mathbf{0}$ in the
$\mathbf{T}_{(m,k)}(\widetilde{\mathbf{V}}_{(m,k)})$). By theorem
2.2 of \cite{SA:ODE:Bound}, the iterates
$\widetilde{\mathbf{V}}^t_{(m,k)}$ remains bounded almost-surely. By
the ODE approach\cite[Chap.2]{Borkar:2008}, we can conclude that the
iterates of the update
$\widetilde{\mathbf{V}}_{(m,k)}^t\rightarrow\widetilde{\mathbf{V}}_{(m,k)}^*$
almost-surely, i.e., converging to the globally asymptotically
stable equilibrium of the associated ODE.

Finally the proof of
$\widetilde{V}_{(m,k)}^{\infty}(\widetilde{Q}_{(m,k)})=\widetilde{V}_{(m,k)}^{*}(\widetilde{Q}_{(m,k)})$
being a monotonic increasing function can be derived in the same way
as Theorem \ref{Thm:MDP_cond}.

\section*{Appendix D: Proof of
Lemma~\ref{lem:q_converge}}\label{app:q_converge} From the
definition of mapping
$T_{(m,k)}^{\mathbb{Q}}(\mathbb{Q}_{(m,k)},Q_{(m,k)},\mathbf{p})$ in
(\ref{eq:Q_map}), defining the vector form mapping
$\mathbf{T}_{(m,k)}^{\mathbb{Q}}(\mathbb{Q}_{(m,k)}):\mathbb{R}^{(1+N_Q)\times|\mathcal{P}|}\rightarrow
\mathbb{R}^{(1+N_Q)\times|\mathcal{P}|}$ where each elements is
given by
$T_{(m,k)}^{\mathbb{Q}}(\mathbb{Q}_{(m,k)},Q_{(m,k)},\mathbf{p})$.
The convergence property of the per-user $\mathbb{Q}$-factor update
algorithm in (\ref{eq:learn_q_factor}) is equivalent to the
following update equation\cite{Borkar:2008}:
\begin{eqnarray}
\label{eq:Q_sync}\begin{array}{l} \mathbb{Q}_{(m,k)}^{t+1}=
\mathbb{Q}_{(m,k)}^{t}+
\gamma(t)\Bigl[\mathbf{T}_{(m,k)}^{\mathbb{Q}}(\mathbb{Q}_{(m,k)}^{t})-\\
\quad\quad\mathbb{Q}_{(m,k)}^{t}(Q_{(m,k)}^I,\mathbf{p}_m^I)\mathbf{e}
-\mathbb{Q}_{(m,k)}^{t}+ \mathbf{Z}^{t+1}_{(m,k)}\Bigr]
\end{array}
\end{eqnarray}
where $\mathbf{Z}^{t+1}_{(m,k)}$ is the vector form of
$Z^{t+1}_{(m,k)}(Q_{(m,k)},\mathbf{p})$,
$Z^{t+1}_{(m,k)}(Q_{(m,k)},\mathbf{p})=\beta_{(m,k)}f(Q_{(m,k)})
+\mathbb{Q}_{(m,k)}^{t}(Q_{(m,k)}^{\prime})-
T_{(m,k)}^{\mathbb{Q}}(\mathbb{Q}_{(m,k)}^{t},Q_{(m,k)},\mathbf{p})$,
and $Q_{(m,k)}^{\prime}=Q_{(m,k)}-U_{(m,k)}+A_{(m,k)}$. $U_{(m,k)}$
is determined by the ICI management control pattern $\mathbf{p}$ and
local CSI $\mathbf{H}_{(m,k)}$. Let
$\mathbb{F}_t=\sigma(\mathbb{Q}_{(m,k)}^l,\mathbf{Z}_{(m,k)}^l,l\leq
t)$ be the $\sigma$-algebra generated by
$\{\mathbb{Q}_{(m,k)}^l,\mathbf{Z}_{(m,k)}^l,l\leq t\}$, It can be
verified that $\mathbb{E}_{\{\mathbf{H}_{(m,k)},A_{(m,k)}\}}[
\mathbf{Z}^{t+1}_{(m,k)}|\mathbb{F}_t]=0$, and
$\mathbb{E}_{\{\mathbf{H}_{(m,k)},A_{(m,k)}\}}[
||\mathbf{Z}^{t+1}_{(m,k)}||^2|\mathbb{F}_t]\leq
C_1(1+||\mathbb{Q}_{(m,k)}^t||^2)$ for a suitable constant $C_1$.
Therefore, the learning algorithm in (\ref{eq:Q_sync}) is also a
standard stochastic learning algorithm with the Martingale
difference noise $\mathbf{Z}^{t+1}_{(m,k)}$. The limiting ODE
associated to track asymptotically is given by:
\begin{equation}
\label{eq:ode_syn_Q}\begin{array}{l}
\dot{\mathbb{Q}}_{(m,k)}(t)=\mathbf{T}_{(m,k)}^{
\mathbb{Q}}(\mathbb{Q}_{(m,k)}(t))-\\
\quad\quad\quad\mathbb{Q}_{(m,k)}(t)-\mathbb{Q}_{(m,k)}(Q_{(m,k)}^I,\mathbf{p}_m^I,t)\mathbf{e}
\end{array}
\end{equation}
Furthermore, there is a unique fixed point $\mathbb{Q}_{(m,k)}^*$
satisfying the following equation \cite{SA:Q_learning}:
\begin{equation}
\mathbf{T}_{(m,k)}^{
\mathbb{Q}}(\mathbb{Q}_{(m,k)}^*)-\mathbb{Q}_{(m,k)}^*-\mathbb{Q}_{(m,k)}^*(Q_{(m,k)}^I,\mathbf{p}_m^I)\mathbf{e}=0
\end{equation}
and it is proved in \cite{SA:Q_learning} that $\mathbb{Q}^*_{(m,k)}$
is the globally asymptotically stable equilibrium for
(\ref{eq:ode_syn_Q}). As a result, following the same argument in
the convergence proof of per-user value function in Lemma
\ref{lem:converge_value}, we can conclude that the iterates of the
update $\mathbb{Q}_{(m,k)}^t\rightarrow\mathbb{Q}_{(m,k)}^*$
almost-surely.

\bibliographystyle{IEEEtran}
\bibliography{IEEEabrv,multicell}

\begin{thebibliography}{10}
\providecommand{\url}[1]{#1}
\csname url@rmstyle\endcsname
\providecommand{\newblock}{\relax}
\providecommand{\bibinfo}[2]{#2}
\providecommand\BIBentrySTDinterwordspacing{\spaceskip=0pt\relax}
\providecommand\BIBentryALTinterwordstretchfactor{4}
\providecommand\BIBentryALTinterwordspacing{\spaceskip=\fontdimen2\font plus
\BIBentryALTinterwordstretchfactor\fontdimen3\font minus
  \fontdimen4\font\relax}
\providecommand\BIBforeignlanguage[2]{{%
\expandafter\ifx\csname l@#1\endcsname\relax
\typeout{** WARNING: IEEEtran.bst: No hyphenation pattern has been}%
\typeout{** loaded for the language `#1'. Using the pattern for}%
\typeout{** the default language instead.}%
\else
\language=\csname l@#1\endcsname
\fi
#2}}

\bibitem{BPC:2008}
A.~Gjendemsjo, D.~Gesbert, G.~E. Oien, and S.~G. Kiani, ``{Binary power control
  for sum rate maximization over multiple interfering links},'' \emph{{IEEE}
  Trans. Wireless Commun.}, vol.~7, pp. 3164--3173, Aug. 2008.

\bibitem{pattern:2009}
K.~Son, Y.~Yi, and S.~Chong, ``{Adaptive multi-pattern reuse in multi-cell
  networks},'' in \emph{Proc. WiOpt}, June 2009.

\bibitem{multicell:cooperation:2008}
B.~L. Ng, J.~S. Evans, S.~V. Hanly, and D.~Akatas, ``{Distributed downlink
  beamforming with cooperative base stations},'' \emph{{IEEE} Trans. Inf.
  Theory}, vol.~54, pp. 5491--5499, Dec. 2008.

\bibitem{Delay_IT:2006}
I.~Bettesh and S.~Shamai, ``Optimal power and rate control for minimal average
  delay: {T}he signal-user case,'' \emph{{IEEE} Trans. Inf. Theory}, vol.~52,
  pp. 4115--4141, Sept. 2006.

\bibitem{Vincent:MIMO}
V.~K.~N. Lau and Y.~Chen, ``{Delay-optimal power and precoder adaptation for
  multi-stream MIMO systems},'' \emph{{IEEE} Trans. Wireless Commun.}, vol.~8,
  pp. 3104--3111, June 2009.

\bibitem{multicell:multiuser:2009}
V.~Corvino, V.~Tralli, and R.~Verdone, ``{Cross-layer radio resource allocation
  for multicarrier air interfaces in multicell multiuser environments},''
  \emph{{IEEE} Trans. Veh. Technol.}, vol.~58, pp. 1864--1875, May 2009.

\bibitem{Bertsekas:2007}
D.~Bertsekas, \emph{{Dynamic programming and optimal control}}.\hskip 1em plus
  0.5em minus 0.4em\relax Athena Scientific, 2007.

\bibitem{RL:survey}
A.~Gosavi, ``{Reinforcement learning: a tutorial survey and recent advances},''
  \emph{INFORMS Journal on Computing}, pp. 1--15, Dec. 2008.

\bibitem{Powell:2007}
W.~B. Powell, \emph{{Approximate dynamic programming: solving the curses of
  dimensionality}}.\hskip 1em plus 0.5em minus 0.4em\relax Wiley-Interscience,
  2007.

\bibitem{BSs:neighbour}
J.~Cho, J.~Mo, and S.~Chong, ``{Joint network-wide opportunistic scheduling and
  power control in multi-cell networks},'' in \emph{Proc. IEEE WoWMoM}, June
  2007.

\bibitem{Cao:2007}
X.~R. Cao, \emph{{Stochastic Learning and Optimization: A Sensitivity-Based
  Approach}}.\hskip 1em plus 0.5em minus 0.4em\relax New York: Springer, 2007.

\bibitem{Palomar:NUM:2006}
D.~P. Palomar and M.~Chiang, ``A tutorial on decomposition methods for network
  utility maximization,'' \emph{{IEEE} J. Sel. Areas Commun.}, vol.~24, pp.
  1439--1451, Aug. 2006.

\bibitem{Q-learning:2007}
D.~V. Djonin and V.~Krishnamurthy, ``{Q-learning algorithms for constrained
  Markov decision processes with randomized monotone policies: Application to
  MIMO transmission control},'' \emph{{IEEE} Trans. Signal Process.}, vol.~55,
  pp. 2170--2181, May 2007.

\bibitem{Ross:2003}
S.~M. Ross, \emph{{Introduction to probability models}}.\hskip 1em plus 0.5em
  minus 0.4em\relax 8th edition, Amsterdam : Academic Press, 2003.

\bibitem{Boyd:2004}
S.~Boyd and L.~Vandenberghe, \emph{{Convex Optimization}}.\hskip 1em plus 0.5em
  minus 0.4em\relax Cambridge, UK: Cambridge University Press, 2004.

\bibitem{Meuleau:1999}
N.~Meuleau, K.~E. Kim, L.~P. Kaelbling, and A.~R. Cassandra, ``Solving pomdps
  by searching the space of finite policies,'' in \emph{Proc. of the Fifteenth
  Conf. on Uncertainty in AI}, 1999, pp. 417--426.

\bibitem{POMDP:1998}
L.~P. Kaelbling, M.~L. Littman, and A.~R. Cassandra, ``{Planning and acting in
  partially observable stochastic domains},'' \emph{Artificial Intelligence},
  vol. 101, pp. 99--134, 1998.

\bibitem{POMDP:survey}
D.~Aberdeen, ``{A (revised) survey of approximate methods for solving partially
  observed Markov decision processes},'' \emph{Tech. Rep. National ICT
  Australia}, 2003.

\bibitem{Thesis:Salodkar}
N.~Salodkar, ``{Online Algorithms for Delay Constrained Scheduling over a
  Fading Channel},'' Ph.D. dissertation, Indian Institute of Technology, {May}
  {2008}.

\bibitem{Georgiadis-Neely-Tassiulas:2006}
L.~Georgiadis, M.~J. Neely, and L.~Tassiulas, ``Resource allocation and
  cross-layer control in wireless networks,'' \emph{Foundations and Trends in
  Networking}, vol.~1, pp. 1--144, 2006.

\bibitem{Borkar:2008}
V.~S. Borkar, \emph{{Stochastic Approximation: A Dynamical Systems Viewpoint
  }}.\hskip 1em plus 0.5em minus 0.4em\relax Cambridge University Press, 2008.

\bibitem{SA:ODE:globally}
------, ``{Recursive self-tuning control of finite Markov chains},''
  \emph{{Appl. Math.}}, vol.~24, pp. 169--188, 1996.

\bibitem{SA:ODE:Bound}
V.~S. Borkar and S.~P. Meyn, ``{The O.D.E method for convergence of stochastic
  approximation and reinforcement learning},'' \emph{{SIAM J. Control Optim.}},
  vol.~38, pp. 447--469, 2000.

\bibitem{SA:Q_learning}
D.~B. J.~Abounadi and V.~S. Borkar, ``{Learning algorithms for Markov decision
  processes with average cost},'' \emph{{SIAM J. Control Optim.}}, vol.~40, pp.
  681--698, 2001.

\end{thebibliography}

\end{document}